\newcommand{\weg}[1]{}
\providecommand*{\twoheadrightarrowfill@}{%
  \arrowfill@\relbar\relbar\twoheadrightarrow
}
\providecommand*{\xtwoheadrightarrow}[2][]{%
  \ext@arrow 0579\twoheadrightarrowfill@{#1}{#2}%
}
\newcommand{\dom}{\text{dom}}
\newcommand{\CCdots}{\ldots}
\newcommand{\twoheadlongrightarrow}{\relbar\joinrel\twoheadrightarrow}
\newcommand{\rewritesto}{\longrightarrow}
\newcommand{\rewritestoP}{\rewritesto_P}
\newcommand{\eqP}{=_P}
\newcommand{\rewritestoby}[1]{\xrightarrow{#1}}
\newcommand{\rewritestoto}{\twoheadlongrightarrow}
\newcommand{\rewritestotoby}[1]{\xtwoheadrightarrow{#1}}
\newcommand{\normal}{\downarrow}
\newcommand{\twoheaddownarrow}{\mathrel{\rotatebox[origin=c]{270}{$\twoheadrightarrow$}}}
\newcommand{\defined}{\twoheaddownarrow}
\newcommand{\representation}[1]{\left\langle#1\right\rangle}
\newcommand{\nattype}{\mathbf{Nat}} 
\newcommand{\nat}{\nattype} 
\newcommand{\listtype}[1]{\mathbf{List}_{#1}}
\newcommand{\bintreetype}[1]{\mathbf{BinTree}_{#1}}
\newcommand{\Zero}{\mathbf{Zero}}
\newcommand{\Succ}{\mathbf{Succ}}
\newcommand{\AddCBV}{\mathbf{AddCBV}}
\newcommand{\AddCBN}{\mathbf{AddCBN}}
\newcommand{\OMega}{\mathbf{Omega}}
\newcommand{\Id}{\mathbf{Id}}
\newcommand{\StoreNat}{\mathbf{StoreNat}}
\newcommand{\UnstoreNat}{\mathbf{UnstoreNat}}
\newcommand{\UseNat}{\mathbf{UseNat}}
\newcommand{\StoreNatA}{\mathbf{StoreNatA}}
\newcommand{\StoreNatB}{\mathbf{StoreNatB}}
\newcommand{\zeroname}{\Zero}
\newcommand{\succname}{\Succ}
\newcommand{\vc}{{\sf c}}
\newcommand{\vr}{{\sf r}}
\newcommand{\vf}{{\sf f}}
\newcommand{\vx}{{\sf x}}
\newcommand{\hatf}{\hat{f}}
\newcommand{\hatC}{\widehat{(c_1,c_2)}}
\newcommand{\programP}{{\mathbb P}}
\newcommand{\IN}{{\mathbb N}}
\newcommand{\itcbv}[2]{\mathbf{ItCBV}_{#1 \rightarrow #2}}
\newcommand{\itcbn}[2]{\mathbf{ItCBN}_{#1 \rightarrow #2}}
\newcommand{\itcbvimpl}[4]{\itcbv{#1}{#2}^{\mathsf{#3}, #4}}
\newcommand{\itcbnimpl}[3]{\itcbn{#1}{#2}^\mathsf{#3}}
\newcommand{\Names}{\mathscr{N}}
\newcommand{\Vars}{\mathscr{V}}
\newcommand{\oeq}{\approx} 
\newcommand{\constr}[2]{\mathbf{Cons}^{#1}_{#2}}
\newcommand{\paramcount}[2]{m^{#1}_{#2}}
\newcommand{\paramtype}[2]{\phi^{#1}_{#2}}
\newcommand{\conttype}[2]{\psi^{#1}_{#2}}
\newcommand{\notjustify}{\thickness0em\justifies}
\newcommand{\trvar}{\mbox{(\text{Var})}}
\newcommand{\trrule}{(\text{Name})}
\newcommand{\trappl}{(\text{Appl})}
\newcommand{\treqconv}{(=\text{-conv})}
\newcommand{\trnil}{(\text{Nil})}
\newcommand{\trcons}{(\text{Cons})}
\newcommand{\trmuconv}{(\mu\text{-conv})}
\newcommand{\trfconv}{(f\text{-conv})}
 \newtheorem{definition}{Definition}[section]
 \newtheorem{lemma}[definition]{Lemma}
 \newtheorem{proposition}[definition]{Proposition}
 \newtheorem{theorem}[definition]{Theorem}
 \newtheorem{corollary}[definition]{Corollary}
 \newtheorem{example}[definition]{Example}
 \newtheorem{notation}[definition]{Notation}
 \newtheorem{convention}[definition]{Convention}
 \newenvironment{proof}{\begin{trivlist} \item[\hskip \labelsep%
 {\bf Proof}]}{\end{trivlist}}
\newcommand{\qed}{$\square$}
\newcounter{thmcounter}
\newtheorem{lemma}[thmcounter]{Lemma}
\newtheorem{theorem}[thmcounter]{Theorem}
\newtheorem{corollary}[thmcounter]{Corollary}
\newtheorem{example}[thmcounter]{Example}
\newtheorem{definition}[thmcounter]{Definition}
\newtheorem{notation}[thmcounter]{Notation}
\newtheorem{convention}[thmcounter]{Convention}
\newcommand{\con}{\mathbf{C}}
\begin{document}

\title{A type system for Continuation Calculus} \author{Herman Geuvers
  \and Wouter Geraedts \and Bram Geron \and Judith van Stegeren}

\author{Herman Geuvers\institute{Radboud University Nijmegen,
\and
Technical University Eindhoven, the Netherlands}
\and Wouter Geraedts\institute{Radboud University Nijmegen, the Netherlands} 
\and Bram Geron\institute{School of Computer Science, University of Birmingham, UK} 
\and Judith van Stegeren\institute{Radboud University Nijmegen, the Netherlands} 
}
\def\authorrunning{H. Geuvers, W. Geraedts, B. Geron, J. van Stegeren}

\def\titlerunning{A type system for Continuation Calculus}


\maketitle

\begin{abstract}
Continuation Calculus (CC), introduced by Geron and Geuvers \cite{gg},
is a simple foundational model for functional computation. It is
closely related to lambda calculus and term rewriting, but it has no
variable binding and no pattern matching. It is Turing
complete and evaluation is deterministic.  Notions like
``call-by-value'' and ``call-by-name'' computation are available by
choosing appropriate function definitions: e.g.\ there is a
call-by-value and a call-by-name addition function.


In the present paper we extend CC with types, to be able to define
data types in a canonical way, and functions over these data types,
defined by iteration. Data type definitions follow the so-called
``Scott encoding'' of data, as opposed to the more familiar ``Church
encoding''. 

The iteration scheme comes in two flavors: a call-by-value and a
call-by-name iteration scheme. The call-by-value variant is a double
negation variant of call-by-name iteration. The double negation
translation allows to move between call-by-name and call-by-value.


%

\end{abstract}


\section{Introduction}

Continuation calculus (or CC) \cite{gg} is a crossover between term
rewriting systems and $\lambda$-calculus. Rather than focusing on
expressions, continuation calculus treats continuations as its
fundamental object. This is accomplished by restricting evaluation to
strictly top-level, discarding the need for evaluation inside
contexts. This also fixes an evaluation order, so the representation
of a program in CC depends on whether call-by-value or call-by-name is
desired.  Furthermore, CC ``separates code from data'' by placing the
former in a static \emph{program}, which is sourced for reductions on
a term. Variables are absent from terms, and no substitution happens
inside terms.

Despite the obvious differences between CC and $\lambda$-calculus with
continuations (or $\lambda_C$), there seems to be a strong correspondence. For
instance, it has been suggested \cite{geronthesis} that programs in either can
be simulated in the other up to (non)termination, in an untyped setting.
Progress so far suggests that continuation calculus might become a useful
alternative characterization of $\lambda_C$, and theorems in one system could
apply without much effort to the other.

The purpose of this paper is to strengthen the correspondence between
CC and $\lambda$-calculus, by introducing a type system for CC and by
showing how data types and functions over data can be defined in
CC. The type system rejects some undesired terms and the types
emphasize the difference between call-by-name and call-by-value. Also,
the types pave the way for proving properties of the programs. The
types themselves do not enforce termination in general, because the system is
`open': programs are understood to be only parts of a larger whole, and 
names with no rule in a certain program are names whose ``behavior'' is 
left unspecified. However, if the
programs are defined using only iteration and
non-circular program rules, all terms are terminating. This we show
in a separate paper.

\subsection{Informal definition of CC}

Terms in CC are of the shape $n.t_1.t_2.\CCdots.t_k$, where $n$ is a name and $t_i$ is again a term. The `dot' denotes binary application, which is left-associative. In CC, terms can
be evaluated by applying {\em program rules\/} which are of the shape
$$n.x_1.x_2.\CCdots.x_p \rewritesto u, \hfill (*)$$ 
where $u$ is a term over variables $x_1 \ldots x_p$. However, this
rule can only be applied on the `top level': 
\begin{itemize}
\item reduction is not a congruence;
\item rule $(*)$ can only be applied to the term $n.t_1.t_2.\CCdots.t_k$
  in case $k=p$,
\item then this term evaluates to $u[t_1/x_1, \ldots, t_p/x_p]$.
\end{itemize}
CC has no pattern matching or variable
binding, but it is Turing complete and a translation to and
from the untyped $\lambda$-calculus can be defined that preserves and reflects termination, see
\cite{geronthesis}.

In continuation calculus, the natural numbers are represented by the
names $\zeroname$ and $\succname$ and the following two program-rules:
\[
\begin{array}{lcl}
\Zero.c_1.c_2 &\rewritesto & c_1 \\
\Succ.x.c_1.c_2 & \rewritesto & c_2.x \\
\end{array}
\]
So $\Zero$ represents $0$, $\Succ.\Zero$ represents $1$,
$\Succ.(\Succ.\Zero)$ represents $2$ etcetera. This representation of
data follows the so-called Scott encoding, which is known from the
untyped lambda calculus by defining $\Zero := \lambda x\,y. x$, $\Succ
:= \lambda n.\lambda x\, y. y\,n$ (e.g.\ see \cite{abadi,jansen}). The
Scott numerals have ``case-distinction'' built in (distinguishing
between $0$ and $n+1$), which can be used to mimic pattern
matching. The more familiar Church numerals have iteration built
in. For Scott numerals, iteration has to be added, or it can be
obtained from the fixed-point combinator in the case of untyped lambda
calculus. For CC the situation is similar: we have to add iteration
ourselves.

As an example, we define addition in two ways: in call-by-value (CBV) and in
call-by-name (CBN) style (\cite{gg}).

\begin{example}\label{exa.add}
\[
	\begin{array}{lcl}
		\mathbf{AddCBV}.n.m.c & \rewritesto & n.(c.m).(\mathbf{AddCBV'}.m.c) \\
		\mathbf{AddCBV'}.m.c.n' & \rewritesto & \mathbf{AddCBV}.n'.(\Succ.m).c \\
	\end{array}
\]
\[
	\begin{array}{lcl}
		\mathbf{AddCBN}.n.m.c_1.c_2 & \rewritesto & n.(m.c_1.c_2).(\mathbf{AddCBN'}.m.c_2) \\
		\mathbf{AddCBN'}.m.c_2.n' & \rewritesto & c_2.(\mathbf{AddCBN}.n'.m)\\
	\end{array}
\]
For $\mathbf{AddCBV}$ we find that
$\mathbf{AddCBV}.(\Succ^n.\Zero).(\Succ^m.\Zero).c$ evaluates to
$c.(\Succ^{n+m}.\Zero)$: the result of the addition function is
computed completely and passed as argument to the continuation
$c$. For $\mathbf{AddCBN}$, only a first step in the computation is
carried out and then the result is passed to the appropriate
continuation $c_1$ or $c_2$.
\end{example}

Continuation calculus as it occurs in \cite{gg} is untyped.  In the
present work we present a typing system for continuation calculus.
The typing system gives the user some guarantee about the meaning and
well-formedness of well-typed terms.  We also develop a general
procedure for defining algebraic data-types as types in CC and for
transforming functions defined over these data types into valid typed
terms in CC. In a separate paper we prove termination of all
well-formed iterative CC programs \cite{geuvers-typlamIt}.

\section{Formal definition of CC}


For the detailed formal definition, we refer to \cite{gg}. Here we give a short recap of CC.
The \emph{terms} are either a name, or the combination $t.u$ of two terms $t$ and $u$; $\Names$ is any infinite set of \emph{names}. So, the terms do not contain
variables. (One could add them, but it's not necessary.) Names act as
labels for functions and constructors in CC. \emph{Names} in CC start
with an uppercase letter and are printed in \textbf{bold}. The dot is
left-associative, so we write $(((n.t_1).t_2).\CCdots.t_k)$ as
$n.t_1.\CCdots.t_k$.

The \emph{head} of a term is its `leftmost' name:
$\text{head}(n.t_1.t_2.\CCdots.t_k) = n$. The \emph{length} of a term
is the number of dots towards the head:
$\text{length}(n.t_1.t_2.\CCdots.t_k) = k$.

To define programs we assume any infinite set $\Vars$ of \emph{variables}.
A \emph{program} is a set of \emph{rules}, each of the following shape
\[
 n.x_1.x_2.\CCdots.x_k \rewritesto u
\]
where the $x_i$ are distinct variables and $u$ is a {\em term over the
  variables $x_1, \ldots, x_k$}, so $u$ is a term that may use, apart
from names, also the variables $x_1, \ldots, x_k$. We say that the
rule \emph{defines} the name $n$.  Within a program, a name may occur
at most once as the head of a rule. If $P$ is a program, the {\em
  domain of $P$}, $\dom(P)$ is the set of names that is defined in
$P$.

Let a  program $P$ be given.
A term can be \emph{evaluated in $P$\/} by applying one of the
rules of $P$ to the whole term as follows. Suppose $P$ contains the
rule $n.x_1.x_2.\dots.x_k \rewritesto u$, then
	\[ n.t_1.t_2.\dots.t_k \rewritestoP u \left[ t_1/x_1, \CCdots, t_k/x_k \right] \]
where the latter denotes the substitution of $t_1, \CCdots, t_k$ for
$x_1, \CCdots, x_k$. We usually omit the subscript $P$ and just
write $\rewritesto$, as $P$ will be clear from the context.

It should be noted that one does not evaluate `under the
application'. To make this explicit we introduce some more
terminology.  A name $n$ has \emph{arity $k$} in $P$ if $P$ contains a
rule of the form
		\[ n.x_1.x_2.\CCdots.x_k \rewritesto u \]
Similarly, a term $t$ has an \emph{arity} in $P$ if $\text{arity}(\text{head}(t)) \geq \text{length}(t)$ and we define
		\[ \text{arity}(t) := \text{arity}(\text{head}(t)) - \text{length}(t) \]
A term $t$ is \emph{defined in $P$} if $\text{head}(t) \in
\text{dom}(P)$. Otherwise $t$ is \emph{undefined in $P$}.  A defined
term is either \emph{complete}, if $\text{arity}(t) = 0$, or
\emph{incomplete} if $\text{arity}(t) > 0$, or \emph{invalid} if it
has no arity.

We write $A \rewritesto B$ for ``$B$ is a reduct of $A$'' and $A
\rewritestoto B$ for ``$A$ reduces in zero or more steps to $B$''.
Because every name is defined at most once in the set of program
rules, every program is a deterministic rewriting system.

A term $M$ is said to be \emph{terminating} (or strongly normalizing)
in $P$ if there exists a reduct $N$ such that $N$ can no longer be
rewritten using the rules from $P$. Then $N$ can be undefined, incomplete, or invalid. We write
\[
\begin{array}{ll}
M \normal_
P &
	\text{if~} M\text{~can not be rewritten using the rules of~} P \\
M \defined_P &
\text{if~} \exists N(M \rewritestoto N \wedge N \normal_P)
\end{array}
	\]
Note that M $\defined_{P}$ implies $M \rewritestoto N \downarrow_P$, as $M \rewritestoto N$ can mean that $M$ rewrites to $N$ in zero steps.

The simplest notion of equality between terms in CC is the transitive,
symmetric, reflexive closure of $\rewritesto$, which we denote by
$\eqP$. So $M_1 \eqP M_2$ in case there is an $N$ such that $M_1
\rewritestoto_P N$ and $M_2 \rewritestoto_P N$. This is an interesting
equivalence relation, however, it is much too fine, as we show in the
following example. (See also \cite{gg}.)

\begin{example}\label{exa.cbnadd}
For the call-by-name addition of Example\ref{exa.add}, we find that
$$\AddCBN.(\Succ.\Zero).\Zero.c_1.c_2 \rewritestoto
\AddCBN'.\Zero.c_2.\Zero \rewritestoto c_2.(\AddCBN.\Zero.\Zero)$$ If
we also compute $\Succ.\Zero.c_1.c_2$, we obtain $c_2.\Zero$, which is
not the same term, so we don't have
$\AddCBN.(\Succ.\Zero).\Zero.c_1.c_2 \eqP \Succ.\Zero.c_1.c_2$.\\ 
If we also allow computing `under the function $c_2$', the terms are
still not equal: $\AddCBN.\Zero.\Zero$ does not reduce to
$\Zero$. However, when supplied with two continuations, $d_1$ and
$d_2$, they are equal: $\AddCBN.\Zero.\Zero.d_1.d_2 \rewritestoto d_1$
and $\Zero.d_1.d_2 \rewritestoto d_1$.
\end{example}

In the example we see two terms $M$ and $N$ which are `equal for all
practical purposes', but we don't have $M\eqP N$. We say that two terms $M$ and $N$ are
\emph{observationally equivalent} under program $P$, notation $M
\approx_P N$, if for all extension programs $P' \supseteq P$ and all
terms X
\[ X.M \defined_{P'} ~\Longleftrightarrow~ X.N \defined_{P'} \]

We recall some properties about $\approx_P$ from \cite{gg}. Proofs can
be found in \cite{gg}.

\begin{lemma}  
The relation $\approx_P$ is a congruence, that is, if $M_1 \approx_P
M_2$ and $N_1 \approx_P N_2$ , then $M_1.N_1 \approx_P M_2.N_2$.
\end{lemma}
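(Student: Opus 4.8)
The plan is to reduce the two-sided context to the one-sided tests that the definition of $\approx_P$ already controls, by absorbing the surrounding context into a fresh program rule. Since the biconditional defining $\approx_P$ makes it an equivalence relation, it suffices to change one component at a time: I would first show $M_1.N_1 \approx_P M_2.N_1$ from $M_1 \approx_P M_2$, then $M_2.N_1 \approx_P M_2.N_2$ from $N_1 \approx_P N_2$, and compose the two biconditionals.

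For the first step, fix an arbitrary extension $Q \supseteq P$ and an arbitrary test term $X$; I must prove $X.(M_1.N_1) \defined_Q \iff X.(M_2.N_1) \defined_Q$. I would choose a name $F$ occurring neither in $Q$ (on any side of any rule) nor in $X, M_1, M_2, N_1$, and form $Q^+ := Q \cup \{F.z \rewritesto X.(z.N_1)\}$, which is again an extension of $P$. The point is that $F.M_i \rewritesto_{Q^+} X.(M_i.N_1)$ in one step, so applying the hypothesis $M_1 \approx_P M_2$ to the extension $Q^+$ with test term $F$ gives $F.M_1 \defined_{Q^+} \iff F.M_2 \defined_{Q^+}$, which I then transport back to $X.(M_i.N_1) \defined_Q$.

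Two small facts carry the transport. First, by determinism of reduction, $F.M_i$ is complete and its only first step is the one above; hence any terminating reduction of $F.M_i$ must pass through $X.(M_i.N_1)$ and conversely, so $F.M_i \defined_{Q^+} \iff X.(M_i.N_1) \defined_{Q^+}$. Second, since no reduction step ever introduces a name that is not already present in the term being reduced or on the right-hand side of some rule of the program, and $F$ is absent both from $X.(M_i.N_1)$ and from $Q$, the fresh rule is never applicable during any reduction of $X.(M_i.N_1)$; therefore $X.(M_i.N_1) \defined_{Q^+} \iff X.(M_i.N_1) \defined_Q$. Chaining these two facts with the hypothesis yields $X.(M_1.N_1) \defined_Q \iff X.(M_2.N_1) \defined_Q$ for the arbitrary $Q$ and $X$, i.e. $M_1.N_1 \approx_P M_2.N_1$. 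The second step is entirely symmetric, using instead the rule $G.z \rewritesto X.(M_2.z)$ and the hypothesis $N_1 \approx_P N_2$.

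The step I expect to be the main obstacle is the \emph{invisibility} of the administrative rule, i.e. the second transport fact: one has to argue carefully that adjoining the rule for the fresh name $F$ cannot change whether $X.(M_i.N_1)$ terminates. This rests on the observation that CC reduction is strictly top-level and substitutes only the arguments $t_1,\dots,t_k$ into the right-hand side $u$, so the set of names occurring along any reduction never grows beyond the names of the initial term together with the names appearing on right-hand sides of the program's rules. The one place where an external assumption is used is in the legitimacy of the choice of $F$, namely that a name fresh for $Q$ and for the finitely many terms involved is always available.
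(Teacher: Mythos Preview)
The paper does not actually prove this lemma; it states that the proof can be found in \cite{gg}. So there is no in-paper argument to compare against.

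Your argument is correct and is in fact the standard way to prove such a statement: absorb the surrounding application context into a single fresh program rule so that the two-hole context $X.([-].N_1)$ (resp.\ $X.(M_2.[-])$) becomes a one-hole test of the form required by the definition of $\approx_P$. The two transport facts you isolate are exactly the ones needed: determinism gives $F.M_i \defined_{Q^+} \Leftrightarrow X.(M_i.N_1) \defined_{Q^+}$, and the freshness-based invariance argument gives $X.(M_i.N_1) \defined_{Q^+} \Leftrightarrow X.(M_i.N_1) \defined_Q$. Your inductive justification of the latter (the head of any reduct is never $F$, hence only $Q$-rules fire and $F$ never enters the term) is the right one; note that it also covers the boundary case that a $Q$-normal form stays normal in $Q^+$, since its head is not $F$.

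Two small remarks. First, your caveat about the availability of a fresh name is well placed: the paper does not explicitly require programs to be finite, but the typed development later takes signatures to be finite, and in \cite{gg} the intended reading is that a fresh name is always available; your proof is correct under that (mild) assumption. Second, when you write the administrative rule $F.z \rewritesto X.(z.N_1)$, it is worth saying explicitly that $X$ and $N_1$ are closed CC terms (built from names only), so this is a legal program rule with $z$ as its sole variable; you use this implicitly but it deserves one line.
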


\begin{lemma}\label{lem.approx} Let $M, N$ be terms of arity $k$. If $M.c_1.\ldots.c_k \eqP N.c_1.\ldots.c_k$ for fresh names $c_1, \ldots, c_k$, then $M \approx_P N$.
\end{lemma}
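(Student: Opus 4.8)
The plan is to unfold the hypothesis into a \emph{genericity} property of the fresh names and then run a lockstep simulation between the reductions of $X.M$ and $X.N$, driven by a single fresh variable. First, since $P$ is deterministic, the assumption $M.c_1.\ldots.c_k \eqP N.c_1.\ldots.c_k$ yields a common reduct: there is a term $R$ with $M.c_1.\ldots.c_k \rewritestoto R$ and $N.c_1.\ldots.c_k \rewritestoto R$. Because the $c_i$ are fresh (they occur neither in $M$, $N$ nor in any rule of $P$), no step that is actually performed can rewrite a $c_i$ at the head: such a term would be stuck, contradicting that the step is taken. Hence every performed step matches a rule whose head lies in $\dom(P)$ and the $c_i$ merely travel as inert subterms, so both reductions are uniform in $c_1,\ldots,c_k$: for all terms $u_1,\ldots,u_k$ we get $M.u_1.\ldots.u_k \rewritestoto R[u_1/c_1,\ldots,u_k/c_k]$ and $N.u_1.\ldots.u_k \rewritestoto R[u_1/c_1,\ldots,u_k/c_k]$. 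This substitution lemma is an induction on the length of the reduction, using that top-level substitution for the $c_i$ commutes with a rule step whose head differs from every $c_i$. As these reductions use only $P$-rules, and an extension cannot redefine the names of $P$, they remain valid in every $P' \supseteq P$.

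Next I would fix an extension $P' \supseteq P$ and a term $X$ and prove the stronger statement that $C[M/z] \defined_{P'} \Leftrightarrow C[N/z] \defined_{P'}$ for every term $C$ over a single fresh variable $z$; taking $C := X.z$ then gives the lemma. Using one variable $z$ with arbitrarily many occurrences is exactly what lets a single relation absorb the copying, duplication and discarding of the distinguished subterm, since $C[M/z]$ and $C[N/z]$ put $M$, respectively $N$, at precisely the $z$-positions. The claim follows from the invariant that the reductions of $C[M/z]$ and $C[N/z]$ run in lockstep and, after each matching block of steps, are again of the form $\tilde C[M/z]$ and $\tilde C[N/z]$ for a common term $\tilde C$ over $z$.

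Maintaining the invariant is a case analysis on the head of $C$. If the head is a name $n \neq z$, then $C[M/z]$ and $C[N/z]$ share the head $n$ and the same length, so either both are stuck (because $n \notin \dom(P')$ or the length is not the arity of $n$) and terminate together, or both fire the same $P'$-rule for $n$ in one step, giving a new common $\tilde C$. If the head of $C$ is $z$, write $C = z.C_1.\ldots.C_l$; since $M$ and $N$ both have arity $k$, the two instances are reducible under the identical condition $l = k$. For $l \neq k$ both are incomplete or invalid, hence stuck, and terminate together; for $l = k$ the genericity property applies with $u_i := C_i[M/z]$, respectively $C_i[N/z]$, so both sides reduce to $R$ under the corresponding substitution, i.e.\ to $\tilde C[M/z]$ and $\tilde C[N/z]$ with $\tilde C := R[C_1/c_1,\ldots,C_k/c_k]$. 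In every case the two reductions are matched block by block, each block containing at least one genuine step on each side, so by determinism $C[M/z]$ reaches a normal form exactly when $C[N/z]$ does, and one reduction is infinite exactly when the other is.

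The main obstacle is the genericity step and its interaction with determinism: I must check that, when $z$ reaches the head with $l = k$, the forced reduction of $M.C_1.\ldots.C_k$ in $P'$ really follows the path $M.c_1.\ldots.c_k \rewritestoto R$ step for step, so that the macro-step is the actual deterministic computation and not a detour. This holds because every intermediate term of that path is complete (otherwise the original reduction would already be stuck), so no step depends on anything beyond the $c_i$, and since every intermediate head lies in $\dom(P)$ it behaves identically in $P'$. The remaining care is purely bookkeeping: since the $c_i$ are chosen distinct from $z$ and fresh, substitution for $z$ commutes with substitution for the $c_i$, which is what validates $\tilde C[M/z] = R[\overrightarrow{C_i[M/z]}/c_i]$.
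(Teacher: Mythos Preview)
The paper does not actually prove this lemma here: it is quoted from \cite{gg} and the reader is referred there for a proof, so there is no in-paper argument to compare against. Judged on its own, your argument is sound. The genericity step is the right idea, and your lockstep invariant via a single fresh variable $z$ (rather than a single marked occurrence) is exactly what is needed to absorb duplication and erasure of $M$ and $N$ during evaluation. One small point worth making explicit: to guarantee that ``each block contains at least one genuine step on each side'' you should observe that, since $M$ and $N$ have arity $k$, the terms $M.c_1.\ldots.c_k$ and $N.c_1.\ldots.c_k$ are complete with heads in $\dom(P)$ and hence both reducible; by determinism you may therefore choose the common reduct $R$ one step beyond each, so that both reduction lengths to $R$ are at least $1$. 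With that choice your termination bookkeeping goes through as stated. The remaining ingredients (commutation of the substitutions for $z$ and for the $c_i$, stability of $P$-reductions in any extension $P'\supseteq P$, and the ``stuck together'' analysis when the head is not $z$ or when $l\neq k$) are correct as written.
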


\begin{corollary}\label{cor.approx} 
If $M \eqP N$ and $\text{arity}(M) = \text{arity}(N) = 0$, then $M
\approx_P N$.
\end{corollary}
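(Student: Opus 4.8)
The plan is to obtain this statement as the $k = 0$ instance of Lemma~\ref{lem.approx}. That lemma takes two terms $M, N$ of a common arity $k$ and concludes $M \approx_P N$ whenever $M.c_1.\ldots.c_k \eqP N.c_1.\ldots.c_k$ holds for fresh names $c_1, \ldots, c_k$. The hypotheses of the corollary give exactly $\text{arity}(M) = \text{arity}(N) = 0$, so $M$ and $N$ share the common arity required by the lemma, and we may set $k = 0$.

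First I would observe that with $k = 0$ there are no continuations to supply: the term $M.c_1.\ldots.c_k$ degenerates to $M$ itself, and likewise $N.c_1.\ldots.c_k$ is just $N$. Consequently the premise $M.c_1.\ldots.c_k \eqP N.c_1.\ldots.c_k$ demanded by Lemma~\ref{lem.approx} reduces precisely to the assumption $M \eqP N$, which we are given. Feeding this into the lemma then yields $M \approx_P N$ directly, with no fresh names introduced and hence uniformly in $P$.

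The only point requiring a word of care is the reading of the empty application $M.c_1.\ldots.c_k$ in the boundary case $k = 0$; once we agree that this denotes $M$ unchanged, there is no residual content. I therefore expect no genuine obstacle here: the corollary is purely a specialization of Lemma~\ref{lem.approx}, and the argument is immediate once the $k = 0$ bookkeeping is made explicit.
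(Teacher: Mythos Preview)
Your proposal is correct and matches the paper's approach: the corollary is stated immediately after Lemma~\ref{lem.approx} precisely because it is the $k=0$ specialization, and the paper gives no further argument beyond that. Your remark about the empty application $M.c_1.\ldots.c_k$ degenerating to $M$ is the only bookkeeping needed, and you have handled it correctly.
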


It is not in general the case that $M \eqP N$ implies $M \approx_P
N$. The reason is that reduction of a term need not respect the arity,
giving rise to undesired situations, as can be seen in the following
example (also taken from \cite{gg}). Our typing system will prevent
these situations to occur.

\begin{example}\label{exa.omega}
Consider the following program rules
\begin{eqnarray*}
\Id.x &\rewritesto & x\\
\OMega.x &\rewritesto & x.x
\end{eqnarray*}
Then $\Id.\OMega\rewritesto \OMega$, which is an incomplete term. If
we append another term to $\Id.\OMega$, it becomes invalid:
$\Id.\OMega.M$ has no arity. On the other hand, $\OMega.\OMega
\rewritesto \OMega.\OMega$, so this term is non-terminating. Hence,
$\Id.\OMega \rewritesto \OMega$, but $\Id.\OMega \not\approx_P
\OMega$.
\end{example}

The type system will prevent situations as in Example \ref{exa.omega},
by making the program rule for $\OMega$ not `well-typed' (and thereby
not allowed).  Also note that $\Id.\OMega \not\approx_P \OMega$ is
only possible because $\text{arity}(\Id.\OMega) \neq
\text{arity}(\OMega)$. The type system will make sure that, if
$M\rewritesto N$, then $N$ also has arity $0$.


\section{Types for Continuation Calculus}


\label{section:typesystem}

\begin{definition}\label{def.CCtypes}
We define \emph{types} of CC as follows.
\[ \text{Type} := \bot ~|~ \text{Var} ~|~ (\text{Type} \rightarrow \text{Type}) ~|~ \mu \text{Var} . \text{Type}.\]
where, in $\mu X.\Phi$, we require $\Phi$ to be of the shape
$$\sigma^1 \rightarrow \ldots \rightarrow \sigma^n \rightarrow \bot \mbox{ (with }n\geq 0\mbox{),}$$
with each $\sigma^i$ of the shape
$$\tau^i_1 \rightarrow \ldots \rightarrow \tau^i_{a_i} \rightarrow
\bot \mbox{ (with }a_i\geq 0\mbox{),}$$ where each $\tau^i_j$ is
either $X $ or does not contain $X $.
\end{definition}

As usual, we leave out the parentheses around functions types, so $A
\rightarrow B \rightarrow C$ always means $A \rightarrow (B
\rightarrow C)$.

As a consequence of the above definition, if we have a type $\mu X
.\Phi(X )$, then $X $ occurs \emph{positively} in $\Phi$. We could
have been more liberal, by allowing {\em all\/} types $\mu X .\Phi(X
)$ where $X $ occurs positively in $\Phi(X )$, but that is unnecessary to interpret first order data-types.

The intention of the recursive type $\mu X  . \Phi(X )$ is that it denotes the
type $\sigma$ for which $\sigma = \Phi(\sigma)$.  
To give the $\mu$-types their semantics, we introduce type equalities.

\begin{definition}\label{def.typeeqn}
We define \emph{equality between types}, $\sigma=\tau$, as the least
equivalence relation that can be derived using the following rules.\\
\begin{tabular}{ccc}
  $\begin{prooftree}
  \:\hspace*{3cm}
  \justifies
  \mu X  . \tau = \tau \left[ \nicefrac{\mu X  . \tau}{X } \right]
  \using\trmuconv
  \end{prooftree}$
  &
\hspace*{3cm}
&
  $\begin{prooftree}
    \[
    \notjustify
    \sigma = \tau
    \]
    \[
    \notjustify
    \upsilon = \rho
    \]
  \justifies
  \sigma \rightarrow \upsilon = \tau \rightarrow \rho
  \using\trfconv
  \end{prooftree}$
\end{tabular}
\end{definition}


\weg{
\begin{definition}\label{def.posneg}
Give a type variable $X $ and a type $\Phi$, the notions {\em
  $X $ occurs positively in $\Phi$\/} and {\em $X $ occurs
  negatively in $\Phi$\/} are mutually defined as follows.
\begin{enumerate}
\item $X $ occurs positively in $\Phi$ if $X  \notin
  \text{FV}(\Phi)$ or $\Phi \equiv X $ or $\Phi \equiv \Phi_1
  \rightarrow \Phi_2$ and $X $ occurs negatively in $\Phi_1$,
  positively in $\Phi_2$.
\item $X $ occurs negatively in $\Phi$ if $X  \notin
  \text{FV}(\Phi)$ or $\Phi \equiv \Phi_1
  \rightarrow \Phi_2$ and $X $ occurs negatively in $\Phi_2$,
  positively in $\Phi_1$.
\end{enumerate}
\end{definition}
} 

For a program rule $n.x_1.\CCdots.x_n \rewritesto u$ with $x_1 :
\tau_1, \CCdots, x_k : \tau_k$, we will define $n$ to have the
following type: $n : \tau_1 \rightarrow \CCdots \rightarrow \tau_k
\rightarrow \bot$.  So $\bot$ will be used as the type of complete
CC-terms. This is very much in line with the approach taken by Miquel
\cite{miquel}.

\weg{
For a program rule
	\[ n.x_1.\CCdots.x_n \rewritesto u \qquad \text{with~} x_1 : \tau_1, \CCdots, x_k : \tau_k ,\]
we will define $n$ to have the following type:
	\[ n : \tau_1 \rightarrow \CCdots \rightarrow \tau_k \rightarrow \bot \]
So the type $\bot$ will be used as the type of complete CC-terms.
}

\weg{
If $X $ occurs positively (resp.\ negatively) in $\Phi$, we also
view $\Phi$ is a \emph{type scheme}, which we indicate by writing
$\Phi(X )$, where $X $ refers to all occurrences of $X $
in $\Phi$. A positive, respectively negative, type scheme $\Phi$ can
be applied to a function $f : \tau \rightarrow \rho$, obtaining
$\Phi(f):\Phi(\rho) \rightarrow \Phi(\tau)$, respectively
$\Phi(f):\Phi(\tau) \rightarrow \Phi(\rho)$ via the usual
\emph{lifting\/} operation.
}
 
CC-types will be printed in \textbf{bold}. For example, the type
representing natural numbers $\mathbb{N}$ will be printed as
$\nattype$. Abstract types (i.e. for any $\sigma \in \text{Type}$) are
printed as $\sigma$, $\tau$, $A$, $B$, etc.

\begin{example}\label{exa.CCdatatypes}
The types in CC of some well-known algebraic data-types
	\[
\begin{array}{lcl}
\mathbf{Bool} &
				:= &
					\bot \rightarrow \bot \rightarrow \bot \\
\nattype &
				:= &
					\mu T.\bot \rightarrow (T \rightarrow \bot) \rightarrow \bot \\
\listtype{A} &
				:= &
					\mu T.\bot \rightarrow (A \rightarrow T \rightarrow \bot) \rightarrow \bot \\
\bintreetype{A} &
				:= &
					\mu T.\bot \rightarrow (A \rightarrow T \rightarrow T \rightarrow \bot) \rightarrow \bot \\
		\end{array}
	\]
\end{example}

\begin{convention}
We make use of the convention in logic to define $\neg A$ as
$A\rightarrow \bot$ to introduce $\neg \sigma$ as an abbreviation for
the type $\sigma \rightarrow \bot$. Similarly, $\neg\neg\sigma$
denotes $(\sigma\rightarrow\bot)\rightarrow \bot$.
\end{convention}


\begin{definition}
A \emph{program signature} $\Sigma$ is a finite set $\mathscr{N}
\times \text{Type}$
	\[ \Sigma = n_1 : \sigma_1, \CCdots, n_p : \sigma_p ~~~(\text{with all~} n_i \text{~distinct})\]
A \emph{typing context} $\Gamma$ is a finite set $\mathscr{V} \times \text{Type}$:
	\[ \Gamma = x_1 : \sigma_1, \CCdots, x_n : \sigma_n ~~~\text{(with all~} x_i \text{~distinct)} \]
\end{definition}

The signature gives the types of the names; it is constructed
specifically for a program $P$. The context is just a ``temporary''
set of variables; contexts will be used to define program rules.

We are interested in two kinds of judgment: 
\begin{enumerate}
\item $\Sigma \vdash P$, a \emph{program judgment}, to express that,
  given a program signature $\Sigma$, $P$ is a well-typed program. So
  $P$ will consist of program rules.
\item $\Gamma \vdash_\Sigma M : A$, a \emph{typing judgment}, to
  express that $M$ -- a term with free variables in $\Gamma$ -- has
  type $A$, given program signature $\Sigma$ and typing context
  $\Gamma$.
\end{enumerate}

\begin{definition}
The derivation rules to derive \emph{typing judgments} are the following
\begin{center}
	\begin{tabular}{rlcrl}
\trvar &
$\begin{prooftree}
x : \sigma \in \Gamma
\justifies
\Gamma \vdash_\Sigma x : \sigma 		
\end{prooftree}
			$ &&
\trrule&
$\begin{prooftree}
n : \sigma \in \Sigma
\justifies
\Gamma \vdash_\Sigma n : \sigma 
\end{prooftree}
			$ \\
&&&&			\\
\trappl&			
$\begin{prooftree}
				\[
					\notjustify
					\Gamma \vdash_\Sigma M : \sigma \rightarrow \tau
				\]
				\[
					\notjustify
					\Gamma \vdash_\Sigma N : \sigma
				\]
				\justifies
					\Gamma \vdash_\Sigma M . N : \tau
			\end{prooftree}
			$ &&
\treqconv&			
$\begin{prooftree}
				\[
					\notjustify
					\Gamma \vdash_\Sigma M : \sigma
				\]
				\[
					\notjustify
					\sigma = \tau
				\]
				\justifies
					\Gamma \vdash_\Sigma M : \tau
			\end{prooftree}
			$
			\\
		\end{tabular}
	\end{center}
\end{definition}

\begin{definition}\label{def.well-typed}
The derivation rules to derive \emph{program judgments} are the following	
\begin{eqnarray*}
\trnil &&
	\begin{prooftree}
		\justifies
			\Sigma \vdash \emptyset
	\end{prooftree}\\
&&\\
\trcons	&&\begin{prooftree}
		\Sigma \vdash P
	\;\;\;\;
	x_1 :A_1, \ldots, x_k : A_k \vdash_\Sigma q : \bot
	\;\;\;\;
n : A_1 \rightarrow \ldots \rightarrow A_k \rightarrow \bot ~\in~ \Sigma
		\justifies
			\Sigma \vdash P \cup \left\{ n . x_1 . \CCdots . x_k \rewritesto q \right\}
		\using{\text{if~} n \text{~not defined in~} P}
	\end{prooftree}
\end{eqnarray*}
We say that {\em program $P$ is well-typed in $\Sigma$\/} in case
$\Sigma \vdash P$. Usually, $\Sigma$ will be clear and we just say
that {\em $P$ is well-typed}. Similarly, we say that the program rule
$n . x_1 . \CCdots . x_k \rewritesto q$ is {\em well-typed in $P$\/} in
case $P \cup \left\{ n . x_1 . \CCdots . x_k \rewritesto q \right\}$
is well-typed.
\end{definition}

The second and third premise in the \trcons\ rule say that the types
of $n . x_1 . \CCdots . x_k$ and $q$ should be both $\bot$. This
guarantees that we can only rewrite terms of type $\bot$.

\begin{example}
\begin{enumerate}
\item Recall the term $\Zero$ with rule $\Zero.c_1.c_2 \rewritesto
  c_1$. We easily verify that this rule is well-typed if we let
  $\Zero : \bot \rightarrow (\nattype \rightarrow\bot)\rightarrow
  \bot$, i.e.\ $\Zero : \nattype$.
\item Similarly, recall the rule for $\Succ$: $\Succ.x.c_1.c_2
  \rewritesto c_2.x$. It is well-typed if we let $\Succ : \nattype
  \rightarrow \bot \rightarrow (\nattype \rightarrow\bot)\rightarrow
  \bot$, i.e.\ $\Succ : \nattype\rightarrow\nattype$.
\item Recalling the definition of $\AddCBV$ in Example \ref{exa.add},
  we see that $\AddCBV$ takes arguments of type $\nattype$, $\nattype$
  and $\nattype \rightarrow \bot$ to produce a term of type $\bot$. So
  the rule is well-typed if we take $\AddCBV : \nattype \rightarrow
  \nattype \rightarrow \neg\neg\nattype$. (NB. $\AddCBV' : \nattype
  \rightarrow (\neg\nattype) \rightarrow \nattype \rightarrow \bot$.)
 \item For the definition of $\AddCBN$ in Example \ref{exa.cbnadd}, we
   see that $\AddCBN: \nattype \rightarrow \nattype \rightarrow
   \nattype$.
\item To type the rule for $\OMega$ in Example \ref{exa.omega}, we
  need $\OMega : \sigma \rightarrow \bot$ with $\sigma =
  \sigma\rightarrow \bot$. But there is no type $\sigma$ for which
  $\sigma = \sigma\rightarrow \bot$, so the rule for $\OMega$ is not
  well-typed.
\end{enumerate}
\end{example}

We have the following properties.

\begin{lemma}
\begin{enumerate}
\item {[}Substitution]
If $n:\tau \in \Sigma$, $\vdash_{\Sigma} t : \sigma$ and
$\vdash_{\Sigma} q:\tau$, then $\vdash_{\Sigma} t [q/n] :\sigma$.
\item {[}Subject reduction] If $\vdash_{\Sigma} t : \sigma$, and
  $t\rewritesto p$, then $\sigma = \bot$ and $\vdash_{\Sigma} p :\bot$.
\end{enumerate}
\end{lemma}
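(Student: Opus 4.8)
The plan is to prove both parts by induction on the structure of typing derivations, treating them as the two standard syntactic properties of a type system. For part (1), Substitution, I would proceed by induction on the derivation of $\vdash_\Sigma t : \sigma$. The substitution $t[q/n]$ replaces the name $n$ (of type $\tau$) by the term $q$ (of type $\tau$) throughout $t$. In the base cases, if $t$ is a variable $x$ (rule \trvar), then $x[q/n] = x$ and the type is unchanged. If $t$ is a name $m$ (rule \trrule), there are two sub-cases: if $m \neq n$, then $m[q/n] = m$ and the typing is preserved directly; if $m = n$, then $n[q/n] = q$, and since $\sigma = \tau$ (both read off from $\Sigma$, and $n:\tau \in \Sigma$ forces $\sigma = \tau$) the hypothesis $\vdash_\Sigma q : \tau$ gives exactly what we need. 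For the \trappl\ case, $t = M.N$ with $M : \rho \rightarrow \sigma$ and $N : \rho$; I would apply the induction hypothesis to both premises and reassemble with \trappl, using $t[q/n] = M[q/n].N[q/n]$. The \treqconv\ case is immediate, carrying the type equality through the induction hypothesis.

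For part (2), Subject Reduction, the key observation is that a reduction step $t \rewritesto p$ can only fire at top level: by the definition of evaluation in $P$, we must have $t = n.t_1.\CCdots.t_k$ where $P$ contains a rule $n.x_1.\CCdots.x_k \rewritesto u$, and then $p = u[t_1/x_1,\ldots,t_k/x_k]$. Since this program rule is well-typed (by the \trcons\ rule that admitted it), there is a typing $x_1:A_1,\ldots,x_k:A_k \vdash_\Sigma u : \bot$ with $n : A_1 \rightarrow \CCdots \rightarrow A_k \rightarrow \bot \in \Sigma$. The strategy is: first, from $\vdash_\Sigma t : \sigma$ and the shape $t = n.t_1.\CCdots.t_k$, show by inverting the \trappl\ and \treqconv\ rules that each $t_i$ has type $A_i$ and that $\sigma = \bot$; second, apply a term-level substitution lemma for variables to conclude $\vdash_\Sigma p : \bot$.

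The main obstacle is the inversion step in part (2), because the rule \treqconv\ makes the type system non-syntax-directed: a type can be silently converted along the equality $=$ (including the $\mu$-unfolding rule \trmuconv), so I cannot simply read off the types of the $t_i$ from the final judgment. To handle this cleanly I would first establish a generation (inversion) lemma stating that if $\vdash_\Sigma n.t_1.\CCdots.t_k : \sigma$, then there exist types $B_1,\ldots,B_k$ with $n : B_1 \rightarrow \CCdots \rightarrow B_k \rightarrow \sigma' \in \Sigma$ (read modulo $=$), each $\vdash_\Sigma t_i : B_i$, and $\sigma = \sigma'$; then use the type-equality rule \trfconv\ together with injectivity of $\rightarrow$ under $=$ to match $B_i$ with $A_i$ and force $\sigma' = \bot$. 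Establishing that $=$ behaves well on function types — in particular that $A_1 \rightarrow \CCdots \rightarrow A_k \rightarrow \bot = B_1 \rightarrow \CCdots \rightarrow B_k \rightarrow \sigma'$ entails $A_i = B_i$ and $\bot = \sigma'$ — is the crux, and it relies on the structural form of the type-equality rules in Definition~\ref{def.typeeqn}. Once generation is in place, I also need a standard variable-substitution lemma (the analogue of part (1) for variables rather than names) to propagate the types $A_i$ of the $t_i$ through the typing of $u$; this is again a routine induction on the derivation of $x_1:A_1,\ldots,x_k:A_k \vdash_\Sigma u : \bot$, entirely parallel to part (1).
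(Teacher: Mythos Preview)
Your proposal is correct and follows the same outline as the paper: induction on the typing derivation for part~(1), and for part~(2) unpacking the program rule, then applying a variable-level substitution lemma to transport the typing of both $n.x_1.\CCdots.x_k$ and $q$ along $[t_i/x_i]$. The paper's proof is only a two-line sketch and glosses over exactly the points you flag as the crux --- the generation/inversion argument needed to obtain $\vdash_\Sigma t_i : A_i$ and the injectivity of $\rightarrow$ modulo the equality of Definition~\ref{def.typeeqn} --- so your version is in fact more complete than what the paper records.
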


\begin{proof}
\begin{enumerate}
\item By induction on the derivation of $\vdash_{\Sigma} t : \sigma$.
\item Using the first property. If $t\rewritesto p$ by the rule $n
  . x_1 . \CCdots . x_k \rewritesto q$, then $t=n . t_1 . \CCdots
  . t_k$ and $p = q[t_1/x_1, \ldots, t_k / x_k]$.

We have $x_1 :A_1, \ldots, x_k : A_k \vdash_\Sigma q : \bot$ and $x_1
:A_1, \ldots, x_k : A_k \vdash_\Sigma n . x_1 . \CCdots . x_k : \bot$,
so by substitution (where we now substitute terms for variables, but
the argument is the same) we have $t:\bot$ and $p:\bot$.  \qed
\end{enumerate}
\end{proof}

We have the following Corollary of the above and of Corollary \ref{cor.approx}.
 
\begin{corollary}\label{cor.eqPapprox}
If $M$ and $N$ are well-typed terms of type $\bot$ and $M \eqP N$,
then $M \approx_P N$.
\end{corollary}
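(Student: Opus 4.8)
The plan is to reduce the statement to Corollary~\ref{cor.approx}, whose hypothesis requires $\text{arity}(M)=\text{arity}(N)=0$, so the real work is to convert the typing information ``$M,N$ well-typed of type $\bot$'' into the arity information ``$M,N$ are complete''. First I would record, using the subject-reduction part of the preceding Lemma, that this is a stable situation: every term occurring in the reductions witnessing $M \eqP N$ again has type $\bot$, so nothing pushes us outside the $\bot$-typed world and it suffices to control the endpoints.

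The crux is the auxiliary claim that \emph{a well-typed closed term of type $\bot$ whose head is defined in $P$ has arity $0$}. I would prove it by analysing $M = n.t_1.\CCdots.t_k$. Because $P$ is well-typed, the only way $n$ can be defined is through the \trcons\ rule of Definition~\ref{def.well-typed}, which forces the signature type of $n$ to be literally $A_1 \rightarrow \CCdots \rightarrow A_{k_0} \rightarrow \bot$ and simultaneously fixes $\text{arity}(n)=k_0$ in $P$. It then remains to see that a derivation of $M:\bot$ can only succeed when $k=k_0$. For this I would introduce the ``$\bot$-depth'' of a type, namely the number of top-level arrows before the terminal $\bot$ (computed by unfolding a leading $\mu$), and check that it is invariant under the type-equality relation of Definition~\ref{def.typeeqn}: the rule \trfconv\ preserves it trivially, and the rule \trmuconv\ preserves it precisely because of the shape restriction built into Definition~\ref{def.CCtypes}, where $\Phi$ in $\mu X.\Phi$ must end in $\bot$ after a fixed number of arrows. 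Since each use of \trappl\ lowers the $\bot$-depth by exactly one while \treqconv\ leaves it unchanged, deriving $M:\bot$ (depth $0$) from $n$ (depth $k_0$) by $k$ applications forces $k=k_0$, whence $\text{arity}(M)=k_0-k=0$; the same argument applies to $N$.

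With both terms shown to have arity $0$, Corollary~\ref{cor.approx} applies to $M \eqP N$ and yields $M \approx_P N$, which is the assertion. The main obstacle I anticipate is not this final step but the auxiliary claim, and within it two delicate points: establishing invariance of the $\bot$-depth under \trmuconv\ (the one place where the shape condition of Definition~\ref{def.CCtypes} is genuinely used), and the borderline case in which $\text{head}(M)\notin\dom(P)$. In that case $M$ is an undefined normal form with \emph{no} arity, so Corollary~\ref{cor.approx} does not literally apply; however such an $M$ is saturated, in the sense that appending further arguments only produces invalid and hence again stuck terms, so prepending any context cannot separate it from a reduct. I would close this gap either by a direct argument to that effect or by first proving a mild strengthening of Lemma~\ref{lem.approx} covering terms that are complete-or-undefined rather than strictly of arity $0$.
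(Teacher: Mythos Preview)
Your approach is exactly the paper's: it states the result as an immediate corollary of subject reduction together with Corollary~\ref{cor.approx}, without spelling out the step from ``type $\bot$'' to ``arity $0$''. You supply considerably more detail than the paper does---the $\bot$-depth invariant and the undefined-head boundary case are your additions, not omissions you need to match---so your proposal is, if anything, more rigorous than the original one-line justification.
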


\subsection{Data types in CC}
\label{sec:constructors}

We have seen the definitions of the types of
booleans, natural numbers and lists in Example
\ref{exa.CCdatatypes}. Here we give a general way of defining
constructors and first order algebraic data types in CC. (That is, for
now we don't allow higher order types in the constructor types.)

\begin{definition}\label{def.datatypes}
A {\em first order data type\/} will be written as
$$\begin{array}{rclrcl}
{\bf data-type} &D& \multicolumn{4}{l}{\mbox{with constructors}}\\
&&& \con^D_1 &:& D^1_1 \rightarrow \ldots \rightarrow  D^1_{a_1} \rightarrow D\\
&&&     & & \ldots\\
&&& \con^D_n &:& D^n_1 \rightarrow \ldots \rightarrow D^n_{a_n} \rightarrow D
\end{array}$$ 
where each of the $D^i_j$ is either $D$ or a type expression that does
not contain $D$. If $D$ is clear from the context, we will omit it as
a superscript and write $\con_i$ instead of $\con^D_i$.
\end{definition}

This defines an algebraic data-type $D$ with $n$ constructors with
names $\con_1, \ldots, \con_n$. Each constructor $\con_i$
has arity $a_i$, which can also be $0$, and then the constructor is a
constant.  

\begin{convention} 
To simplify notation later, we abbreviate the list of argument
types of a constructor, writing $D^1$ for $D^1_1 \ldots D^1_{a_1}$
etc, in a style similar to uncurrying.
\end{convention}

For every constructor we will introduce a name in CC and a rule that
defines it. This program rule acts as a destructor of $D$. If a term
of type $D$ has constructor $\constr{D}{i}$ as its head, all the
arguments of that constructor $t^i_1, \ldots, t^i_{a_i}$ will be
returned to the corresponding continuation $c_i$.

\begin{example}\label{exa.listCC}
Consider the algebraic data type of lists over a type $A$, $\listtype{A}$:
%
In Example \ref{exa.CCdatatypes}, we have defined this data-type in CC
$\mu T.\bot \rightarrow (A \rightarrow T \rightarrow \bot) \rightarrow
\bot $.  The constructors for lists are added to CC by introducing the
following program rules to our program, where $\textbf{Nil} : \bot$
and $\textbf{Cons} : A \rightarrow \listtype{A} \rightarrow \bot$.
\[
	\begin{array}{rcl}
		\textbf{Nil} . c_1 . c_2 &
		\longrightarrow &
		c_1 \\
		
		\textbf{Cons} . x_1 . x_2 . c_1 . c_2 &
		\longrightarrow &
		c_2 . x_1 . x_2 \\
	\end{array}
\]
\end{example}

We now give the general definition of first order data-type in CC.

\begin{definition}\label{def.CCdatatypes}
Given a first order data type $D$ as in Definition \ref{def.datatypes} with $n$ constructors, where, for
$1\leq i \leq n$, $\con_{i} : D^i_1 \rightarrow \ldots D^i_{a_i} \rightarrow D,$ we define the following type $D$ in CC.
$$D := \mu X . (D^1[X /D] \rightarrow \bot) \rightarrow \ldots \rightarrow (D^n[X /D]\rightarrow \bot) \rightarrow \bot.$$
For $i \in [1\ldots n]$,  we add the following constructor $\constr{D}{i}$ to the signature $\Sigma$.
$$\constr{D}{i} : D^i \rightarrow D.$$ Finally, we add for each $i$
($1\leq i\leq n$) the following program rule that acts as a destructor
for $D$.
\[
		\constr{D}{i} . x^i_1 . \ldots . x^i_{a_i} . c_1
                . \ldots . c_n \longrightarrow c_i . x^i_1 . \ldots
                . x^i_{a_i}
	\]
\end{definition}

So, in CC we always have $\text{arity}(\constr{D}{i}) = a_i +
n$. Example \ref{exa.listCC} conforms with this definition. The
constructors are well-typed in CC because we have the equation
$$D = (D^1 \rightarrow \bot) \rightarrow \ldots \rightarrow (D^n\rightarrow \bot) \rightarrow \bot.$$

\begin{notation}\label{not.representation}
Let $D$ be a data type and $\mathbf{D}$ its representation as a type
in Continuation Calculus. If $d:D$ (so $d$ is a data type element of
$D$), we denote by $\representation{d} : \mathbf{D}$ the encoding of
$d$ as a term in CC. (So $\representation{d}$ is defined in the
canonical way using the constructors of Definition
\ref{def.CCdatatypes}.)
\end{notation}

\weg{

To illustrate how the constructors and destructors work in practice
and to give an example of the typing system
we will now take a look at the types of
destructors.

\begin{example}
For the algebraic data type $\listtype{A}$, we have:
\begin{eqnarray*}
\textbf{Nil}		&
			: &
				\bot \rightarrow (A \rightarrow \listtype{A} \rightarrow \bot) \rightarrow \bot \\
\textbf{Nil} . c_1 . c_2 &
			\longrightarrow &
				c_1 \\
\textbf{Cons}		&
			: &
				A \rightarrow \listtype{A} \rightarrow \bot \rightarrow (A \rightarrow \listtype{A} \rightarrow \bot) \rightarrow \bot \\
\textbf{Cons} . x_1 . x_2 . c_1 . c_2 &
			\longrightarrow &
				c_2 . x_1 . x_2
\end{eqnarray*}
\end{example}

A rule that defines constructor name $\constr{\sigma}{i}$ of a type $\sigma$ with $n$ constructors
\[
	\constr{\sigma}{i} . x_{i, 1} . \CCdots . x_{i, \paramcount{\sigma}{i} } . c_1 . \CCdots . c_n \longrightarrow c_i . x_{i, 1} . \CCdots . x_{i, \paramcount{\sigma}{i} }
\]

is of type
\[
	\underbrace{
		\paramtype{\sigma}{i, 1} \rightarrow \CCdots \rightarrow \paramtype{\sigma}{i, \paramcount{\sigma}{i}}
	}_\text{(1)}
	\rightarrow 
		\underbrace{
			(\paramtype{\sigma}{1, 1} \rightarrow \CCdots \rightarrow \paramtype{\sigma}{1, \paramcount{\sigma}{1}} \rightarrow \bot)
		}_{(2_1)}
		\rightarrow \CCdots \rightarrow
		\underbrace{
			(\paramtype{\sigma}{n, 1} \rightarrow \CCdots \rightarrow \paramtype{\sigma}{n, \paramcount{\sigma}{n}} \rightarrow \bot)
		}_{(2_n)}
	\rightarrow \underbrace{\bot}_\text{(3)}
\]
}

\begin{convention}
Unless otherwise specified, $D^i_j$ is the type of the $j^\text{th}$
argument of the $i^\text{th}$ constructor of data type $D$. In the
case of $\listtype{A}$: $D^2_1 = A$, $D^2_2 = \listtype A$.
\end{convention}

\weg{
We can divide the type of a rule for $\constr{D}{i}$ in three parts:
\begin{enumerate}
\item[(1)] The types of the arguments of constructor $\constr{D}{i}$:
  $D^i_1 \ldots D^i_{a_i}$.
\item[(2)] The types of continuations $c_1, \ldots , c_n$, where
  each continuation is a function from the arguments of its
  corresponding constructor to $\bot$, so $c_j : D^j \rightarrow \bot$.
\item[(3)] We have defined $\bot$ as the type of all complete terms. A
  complete term $t$ is a term with exactly the right number of
  parameters, such that $\text{arity}(t) = 0$. So a program rule
  should eventually yield $\bot$.
\end{enumerate}
}
\weg{
\begin{convention}
	\label{convention:conttype}
	Unless otherwise specified $\conttype{\sigma}{i}$ is the type of the $i^\text{th}$ continuation of type $\sigma$:
	\[
		\conttype{\sigma}{i} = \paramtype{\sigma}{i, 1} \rightarrow \CCdots \rightarrow \paramtype{\sigma}{i, \paramcount{\sigma}{i}} \rightarrow \bot
	\]
\end{convention}
}

We often give the typing of terms via a derivation rule.
\[
	\begin{prooftree}
		\[
			\notjustify
			x_{1} : D^i_1
		\] \CCdots
		\[
			\notjustify
			x_{a_i} : D^i_{a_i}
		\]
		\[
			\notjustify
			c_1 : D^1 \rightarrow \bot
		\] \CCdots
		\[
			\notjustify
			c_n : D^n \rightarrow \bot
		\]
		\justifies
		\constr{D}{i}.x_{1}.\CCdots.x_{a_i}.c_1.\CCdots.c_n : \bot
	\end{prooftree}
\]


\subsection{Iteration schemes}

In this section we give iteration schemes for continuation calculus
that provides general mechanisms for defining functions by
recursion. An iteration scheme defines recursive functions in a
general way, ensuring well-definedness and termination for these
functions. In CC we have a call-by-name and a call-by-value
variant of the iteration scheme.

\weg{
The well-known Church encodings of data types in (untyped or typed)
lambda calculus basically {\em are\/} the iteration schemes
themselves. As we have our data in CC based on a Scott encoding, we
have to add an iteration scheme to get (a limited form of) recursion.
}
The simplest and most well-know form of iteration is over $\mathbb{N}$: Given $b : B$, $f : B \rightarrow B$, the function $\mathbf{It}(b,f) :
\mathbb{N} \rightarrow B$ {\em defined by iteration from $b$ and $f$},
is given by
\[
		\mathbf{It}(b,f)(n) = \left\{
			\begin{array}{ll}
				b & \text{if~} n = 0 \\
				f(\mathbf{It}(b,f)(m)) & \text{if~} n = m+1
			\end{array}
		\right. 
	\]

\weg{
\begin{definition}\label{def:itnat}
Given $b : B$, $f : B \rightarrow B$, the function $\mathbf{It}(b,f) :
\mathbb{N} \rightarrow B$ {\em defined by iteration from $b$ and $f$},
is given by
\[
		\mathbf{It}(b,f)(n) = \left\{
			\begin{array}{ll}
				b & \text{if~} n = 0 \\
				f(\mathbf{It}(b,f)(m)) & \text{if~} n = m+1
			\end{array}
		\right. 
	\]
Given $b : B$ and $f : A \rightarrow B \rightarrow B$, the function
$\mathbf{It}(b,f) : \listtype{A} \rightarrow B$ {\em defined by
  iteration from $b$ and $f$}, is given by
\[
		\mathbf{It}(b,f)(l) = \left\{
			\begin{array}{ll}
				b & \text{if~} l = \mathbf{Nil} \\
				f(a, \mathbf{It}(b,f)(k)) & \text{if~} l = \mathbf{Cons}~a~k
			\end{array}
		\right. 
	\]
\end{definition}

A whole variety of functions fits in the iteration scheme: addition,
multiplication, exponentiation and even the Ackermann function if one
allows $B$ to be a function type.

Assume we want to make an iterator for a data-type $D$ with $n$
constructors.  The iterator then takes functions $f_1 \ldots f_n$ and
some value $v \in D$.  The iterator will destruct $v$, yielding the
constructor $c_i$ applied to arguments $\vec{t}$. Consequently it will
apply the corresponding function $f_i$, using the relevant arguments.
If an argument is again of type $D$, the iterator will apply itself
recursively to that argument.
}

\weg{
Iteration schemes introduce mechanics for computation in CC.
For all $\tau, \sigma \in \mbox{Type}$ and $\tau = \mu Y.\phi[Y]$ with constructors $\constr{\tau}{1}, \ldots, \constr{\tau}{n}$ and corresponding arguments $\vec{x}_{1,[1,\paramcount{\tau}{1}]}, \ldots, \vec{x}_{n,[1,\paramcount{\tau}{n}]}$ we can construct an iteration scheme (see below).
}

An iterator $\mathbf{It}_{D}$ for a general data-type $D$ (following the
general scheme for first order data-types in Definition
\ref{def.datatypes}) to some type $B$ has the following type:
$$\prooftree
f_1 : D^1[B/D] \rightarrow B \;\;\;\ldots\;\;\; f_n : D^n[B/D]\rightarrow B
\justifies
\mathbf{It}_D\,f_1\ldots f_n : D \rightarrow B
\endprooftree$$
with
$$\mathbf{It}_D\,f_1\ldots f_n (\con_i\, v_1 \ldots v_{a_i}) = f_i\, V_1\ldots V_{a_i},$$
where $V_j = \mathbf{It}_D\,f_1\ldots f_n\, v_j$ if $v_j:D$ and $V_j =
v_j$ otherwise.

This is not yet the correct type for an iteration scheme in CC. We do
not yet have any continuations as parameters.  We will provide
separate CBN and CBV iteration schemes below.

\weg{
\begin{notation}
	Instead of $x_1.x_2.x_3.x_4$ we will write $\vec{x}_{[1,4]}$.
	Schematically we use the following abbreviations:
	\[
		\begin{array}{lcl}
			\vec{(-)}_{[i,j]} & = & (-)_i . \CCdots . (-)_j \\
			\vec{(-)}_{k,[i,j]} & = & (-)_{k,i} . \CCdots . (-)_{k,j}
		\end{array}
	\]
\end{notation}
}

\subsection{Call-by-name iterators}
For a call-by-name iterator for a data-type $D$, we also have to
consider the return data-type $B$. CBV calculates the entire return value,
but for CBN it is enough to return the proper continuations with the
proper parameters after calculating only one step in the recursion.
So the CBN-iterator also passes around the continuations to the resulting
values. If result type $B$ has $m$ constructors, then the iterator $\itcbn{D}{B}$ also
needs $m$ continuations as arguments.  This differs from a call-by-value iterator,
where we only have one continuation. 

Let in the following, $D$ be a data-type with $n$ constructors ($\con^D_1,\ldots,\con^D_n$) and $B$
a data-type with $m$ constructors.

\begin{definition}\label{def.CCitcbn}
We define the \emph{call-by-name iterator for type $D$ to type $B$} as
follows. We first give the types of the new names. We abbreviate $f_1
\ldots f_n$ to $\vec{f}$, $c_1 \ldots c_m$ to $\vec{c}$ and $x_1
\ldots x_{a_i}$ to $\vec{x}$.
$$\begin{prooftree}
f_1 : D^1[B/D]\rightarrow B \ldots   f_n : D^n[B/D]\rightarrow B 
\;\;\;\;
 c_1 : B^1\to\bot \ldots   c_m : B^m\to\bot
\;\;\;\;   x :D
\justifies
\itcbn{D}{B} . \vec{f} . x . \vec{c} :\bot
\end{prooftree}$$

$$\begin{prooftree}
f_1 : D^1[B/D]\rightarrow B \ldots   f_n : D^n[B/D]\rightarrow B 
\;\;\;\;
 c_1 : B^1\to\bot \ldots   c_m : B^m\to\bot
\;\;\;\;   x_1 :D^i_1 \ldots x^i_{a_i} :D^i_{a_i}
\justifies
\itcbnimpl{D}{B}{i} . \vec{f} . \vec{c} . x_1 \ldots x_{a_i} :\bot
\end{prooftree}$$
The program  rules are
\[
\itcbn{D}{B} . \vec{f} . x . \vec{c}
	\rewritesto
x . (\itcbnimpl{D}{B}{1} . \vec{f} . \vec{c}) . \CCdots . (\itcbnimpl{D}{B}{n} . \vec{f} . \vec{c}),
	\]
and for $i \in[1\ldots n]$:
\[
\itcbnimpl{D}{B}{i} . \vec{f} . \vec{c} . \vec{x}
			\rewritesto
f_i . b(x_{1}) . \CCdots . b(x_{a_i}) . \vec{c}
	\]
	\[
\text{with~} b(x) = \left\{
\begin{array}{ll}
\itcbn{D}{B} .\vec{f} . x & \mbox{if~} x : D \\
			x & \mbox{otherwise} \\
			\end{array}
		\right.
	\]
\weg{
Note that there are $n$ program rules with name $\mathbf{ItCBN}^i_{D \rightarrow B}$ for $1 \le i \le n$, where $n$ is the number of constructors of type $D$.
}
\end{definition}

\weg{
In order to explain what the type of this iterator scheme is, we dissect the parameters:
\begin{enumerate}
	\item $f_1, \CCdots, f_n$ are the same functions as defined in the iteration scheme. The type of such a function $f_i$ is:
		\[
			\begin{array}{lrcl}
				f_i : \theta^{\tau \rightarrow B}_{i} &
					\text{with~} \theta^{\tau \rightarrow B}_{i} &
						 = &
						 	X _{i,1} \rightarrow \CCdots \rightarrow X _{i,\paramcount{\tau}{i}} \rightarrow \sigma \\
				\\
				&
					\text{and~} X _{i,j} &
						= &
							\left\{
								\begin{array}{ll}
									\sigma &
									\mbox{if~} \phi_{i,j} = \tau\\
		
									\phi_{i,j} &
									\mbox{otherwise} \\
								\end{array}
							\right.
			\end{array}
		\]
	\item $x$ is the input, with $x : \tau$.
	\item $c_1, \CCdots, c_p$ are the continuations for the different constructors of $\sigma$. The type of such a continuation $c_i$ is:
		\[
			c_i : \conttype{\sigma}{i} \qquad \text{with~} \conttype{\sigma}{i} = \paramtype{\sigma}{i, 1} \rightarrow \CCdots \rightarrow \paramtype{\sigma}{i, \paramcount{\sigma}{i}} \rightarrow \bot \ \text{(as defined in Convention \ref{convention:conttype})}
		\]
\end{enumerate}

Combined, this translates to the following type:
\[
	\itcbn{\tau}{\sigma} : \theta^{\tau \rightarrow \sigma}_{[1,n]} \rightarrow \tau \rightarrow \psi^\sigma_{[1,p]} \rightarrow \bot
\]
}

In Section \ref{examples}, we give in Example \ref{ex:listcbn} the
call-by-name iterator for $\listtype{A}$ to $\nattype$. The following
can easily be checked. (See Definition \ref{def.well-typed} for the
formal definition of well-typed rules.)

\begin{lemma}
The rules given in Definition \ref{def.CCitcbn} are well-typed.
\end{lemma}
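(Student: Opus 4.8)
The plan is to appeal directly to the program-judgment rules of Definition~\ref{def.well-typed}. Since $\emptyset$ is trivially well-typed by \trnil, it suffices to show that each iteration rule can be adjoined by one application of \trcons; that is, for each rule $n . x_1 . \CCdots . x_k \rewritesto q$ I must exhibit the declared type $n : A_1 \rightarrow \CCdots \rightarrow A_k \rightarrow \bot$ in $\Sigma$ (which is exactly what the two derivation rules recorded in Definition~\ref{def.CCitcbn} supply for $\itcbn{D}{B}$ and $\itcbnimpl{D}{B}{i}$) and derive the typing judgment $x_1 : A_1, \CCdots, x_k : A_k \vdash_\Sigma q : \bot$. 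So the lemma reduces to building two families of typing derivations, using only \trvar, \trrule, \trappl, and -- at the crucial points -- \treqconv with the type equalities of Definition~\ref{def.typeeqn}. The single recurring ingredient is the $\mu$-conversion \trmuconv, which unfolds $D = (D^1\rightarrow\bot)\rightarrow\CCdots\rightarrow(D^n\rightarrow\bot)\rightarrow\bot$ and, likewise, $B = (B^1\rightarrow\bot)\rightarrow\CCdots\rightarrow(B^m\rightarrow\bot)\rightarrow\bot$.

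For the first rule, $\itcbn{D}{B} . \vec{f} . x . \vec{c} \rewritesto x . (\itcbnimpl{D}{B}{1} . \vec{f} . \vec{c}) . \CCdots . (\itcbnimpl{D}{B}{n} . \vec{f} . \vec{c})$, I would first use \treqconv to read $x : D$ as $x : (D^1\rightarrow\bot)\rightarrow\CCdots\rightarrow(D^n\rightarrow\bot)\rightarrow\bot$, so that $x$ expects $n$ arguments, the $i$-th of type $D^i\rightarrow\bot$. Feeding the $n$ functions $\vec{f}$ and the $m$ continuations $\vec{c}$ to $\itcbnimpl{D}{B}{i}$ consumes exactly the first $n+m$ arguments of its declared type, leaving $\itcbnimpl{D}{B}{i} . \vec{f} . \vec{c}$ of type $D^i_1 \rightarrow \CCdots \rightarrow D^i_{a_i} \rightarrow \bot$, i.e. $D^i \rightarrow \bot$. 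These are precisely the $n$ arguments $x$ is waiting for, so $n$ applications of \trappl give the body type $\bot$.

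For the $i$-th destructor rule, $\itcbnimpl{D}{B}{i} . \vec{f} . \vec{c} . \vec{x} \rewritesto f_i . b(x_1) . \CCdots . b(x_{a_i}) . \vec{c}$, the key point is that each $b(x_j)$ has type $D^i_j[B/D]$, which is exactly the $j$-th argument type expected by $f_i : D^i[B/D] \rightarrow B$. This is where the case split defining $b$ is matched against the data-type discipline of Definition~\ref{def.datatypes}: if $D^i_j = D$ then $b(x_j) = \itcbn{D}{B} . \vec{f} . x_j$, and applying $\vec{f}$ and $x_j : D$ to the iterator's declared type leaves $(B^1\rightarrow\bot)\rightarrow\CCdots\rightarrow(B^m\rightarrow\bot)\rightarrow\bot$, which by \treqconv is $B = D^i_j[B/D]$; if $D^i_j$ does not contain $D$ then $b(x_j) = x_j : D^i_j$ and $D^i_j[B/D] = D^i_j$, so the type matches on the nose. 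Hence $f_i . b(x_1) . \CCdots . b(x_{a_i}) : B$, and using \treqconv once more to expose $B$ as $(B^1\rightarrow\bot)\rightarrow\CCdots\rightarrow(B^m\rightarrow\bot)\rightarrow\bot$, the $m$ continuations $\vec{c}$ are consumed by $m$ further applications of \trappl, giving $\bot$.

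The only real obstacle is bookkeeping rather than mathematics: keeping the argument order straight (the declared types place $\vec{c}$ before $\vec{x}$ in $\itcbnimpl{D}{B}{i}$ but $x$ before $\vec{c}$ in $\itcbn{D}{B}$) and verifying that the substitution $[B/D]$ in the declared types commutes correctly with the two-case definition of $b$. Both the iterator type and the recursion are engineered so that $[B/D]$ turns the recursive positions ($D^i_j = D$) into $B$ -- matching the recursive call $b(x_j) = \itcbn{D}{B} . \vec{f} . x_j : B$ -- while leaving the non-recursive positions untouched, matching $b(x_j) = x_j$. Once this correspondence is observed, every remaining step is a mechanical instance of \trappl or \treqconv.
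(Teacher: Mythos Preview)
Your verification is correct and is exactly the routine type-checking the paper leaves to the reader: the paper gives no proof for this lemma, only the remark that it ``can easily be checked'' against Definition~\ref{def.well-typed}. Your two derivations (unfolding $D$ for the main rule, and matching $b(x_j):D^i_j[B/D]$ against the argument types of $f_i$ for the auxiliary rules, then unfolding $B$ to absorb $\vec{c}$) are precisely the intended check, and your observation that the whole thing hinges on the case split in $b$ lining up with the substitution $[B/D]$ is the one non-mechanical point.
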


\subsection{Call-by-value iterators}

Call-by-value iterators differ from their call-by-name cousins in the
sense that the result of the computation is `normalized' or fully
evaluated at the end of the computation.

\begin{definition}
\label{def.CCitcbv}
We define the \emph{call-by-value iterator for a type $D$ to $B$} as
follows. (We abbreviate $f_1 \ldots f_n$ to $\vec{f}$.)
$$\begin{prooftree}
  f_1 : D^1[B/D]\rightarrow \neg\neg B \ldots   f_n : D^n[B/D]\rightarrow \neg\neg B 
\;\;\;\;
  c : \neg B
\;\;\;\;   d :D
\justifies
  \itcbv{D}{B}\, \vec{f} \, c \, d : \bot
\end{prooftree}$$
and for $i \in [1, n]$ and $j \in [1, a_i]$, under the same typing hypotheses for $\vec{f}$ and $c$:
$$\begin{prooftree}
x_j:D^i_{j}\ldots x_{a_i}:D^i_{a_i}
\;\;\;\; r_1 :D^i_{1}[B/D] \ldots r_{j-1} :D^i_{j_1}[B/D]
\justifies
\itcbvimpl{D}{B}{i}{j} .\vec{f}. c. x_j\ldots x_{a_i}.r_1\ldots r_{j-1} :\bot
\end{prooftree}$$

The program rules are
\[
\begin{array}{lcl}
\itcbv{D}{B} . \vec{f} . c . x &
			\rewritesto &
x . (\itcbvimpl{D}{B}{1}{1} . \vec{f} . c) . \CCdots . (\itcbvimpl{D}{B}{n}{1} . \vec{f} . c) \\
\end{array}
\]
and for $i \in [1, n]$ and $j \in [1, a_i]$:
\[
\begin{array}{lcl}
\itcbvimpl{D}{B}{i}{j} .\vec{f}. c. x_j\ldots x_{a_i}.r_1\ldots r_{j-1} &
\rewritesto &
\text{LHS} \vspace{1.5ex} \\
  \multicolumn{3}{c}{\text{LHS} =
  \left\{\begin{array}{ll}
\itcbv{D}{B} . \vec{f}. ( \itcbvimpl{D}{B}{i}{j+1} . \vec{f}. c . x_{j+1}  \ldots  x_{a_i} . r_1 \ldots r_{j-1} ) . x_{j} &
		\text{if~} x_{j} : D \\
 \itcbvimpl{D}{B}{i}{j+1} . \vec{f} . c . x_{j+1}\ldots x_{a_i} . r_{1}\ldots r_{j-1} . x_{j} &
                \mbox{otherwise} \\
 \end{array}\right.
	} \vspace{1.5ex} \\
\itcbvimpl{D}{B}{i}{a_i+1} . \vec{f} . c . r_1 \ldots r_{a_i} &
				\rewritesto &
					f_i . r_1 \ldots r_{a_i} . c \\
		\end{array}
	\]
\end{definition}

\weg{
To explain what the type of this iterator scheme is, we dissect the parameters:
\begin{enumerate}
	\item $f_1, \CCdots, f_n$ are the same functions as defined in the iteration scheme, except that they directly apply the CBV-continuation. The type of such a function $f_i$ is:
		\[
			\begin{array}{lrcl}
				f_i : \theta^{D \rightarrow B}_{i} &
					\text{with~} \theta^{D \rightarrow B}_{i} &
						 = &
						 	X _{i,1} \rightarrow \CCdots \rightarrow X _{i,\paramcount{D}{i}} \rightarrow (B \rightarrow \bot) \rightarrow \bot \\
				\\
				&
					\text{and~} X_{i,j} &
						= &
							\left\{
								\begin{array}{ll}
									B &
									\mbox{if~} \phi_{i,j} = D\\
		
									\phi_{i,j} &
									\mbox{otherwise} \\
								\end{array}
							\right.
			\end{array}
		\]
	\item $c$ is the sole continuation, which is of type $B \rightarrow \bot$.
	\item $x$ is the input, with $x : D$.
\end{enumerate}

Combined, this translates to the following type:
\[
	\forall B \in \text{Type~} . \itcbv{D}{B} : \theta^{D \rightarrow B}_{[1,n]} \rightarrow (B \rightarrow \bot) \rightarrow D \rightarrow \bot
\]
}

The technical subtlety in the call-by-value reduction rule lies in
the fact that, in case data-type $D$ has a constructor with more than
one recursive sub-term (e.g.\ in the case of binary trees, where we
have `join', taking two sub-trees), we have to evaluate {\em all
  recursive sub-terms}. The reduction rule makes sure that we do that
and reduce to a complete value before calling the function. 
The following lemma helps in better understanding the terms
$\itcbvimpl{D}{B}{i}{j} .\vec{f}. c.$ in Definition \ref{def.CCitcbv}.

\begin{lemma}\label{lem.corrcbvit}
For $j\in [1\ldots a_i]$, given $x_j:D^i_{j}\ldots x_{a_i}:D^i_{a_i}$
and $r_1 :D^i_{1}[B/D] \ldots r_{j-1} :D^i_{j_1}[B/D]$, the reduct of
$\itcbvimpl{D}{B}{i}{j} .\vec{f}. c. x_j\ldots x_{a_i}.r_1\ldots
r_{j-1}$ is of type $\bot$.
\end{lemma}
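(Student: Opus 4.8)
The plan is to verify directly, by inspecting the reduction rule for $\itcbvimpl{D}{B}{i}{j}$ in Definition \ref{def.CCitcbv}, that its right-hand side is typable with type $\bot$. Since the types of the names $\itcbv{D}{B}$ and $\itcbvimpl{D}{B}{i}{j}$ are fixed by the two typing prooftrees in that definition, this is a finite type-checking that uses only the rules \trrule, \trvar, \trappl\ and \treqconv\ together with the declared types of $\vec{f}$ and $c$. For $j \in [1\ldots a_i]$ the rule splits into two cases according to whether $x_j : D$, so there are two reducts to check.

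In the ``otherwise'' case ($x_j$ not of type $D$), the reduct is $\itcbvimpl{D}{B}{i}{j+1} . \vec{f} . c . x_{j+1}\ldots x_{a_i} . r_1\ldots r_{j-1} . x_j$. By the data-type convention $D^i_j$ does not contain $D$, hence $D^i_j[B/D] = D^i_j$ and $x_j : D^i_j[B/D]$; thus $x_j$ fills exactly the $r_j$-slot in the declared type of $\itcbvimpl{D}{B}{i}{j+1}$, and the typing prooftree (instantiated at index $j+1$) gives type $\bot$ immediately.

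The delicate case is $x_j : D$, where the reduct is $\itcbv{D}{B} . \vec{f} . (\itcbvimpl{D}{B}{i}{j+1} . \vec{f} . c . x_{j+1}\ldots x_{a_i} . r_1\ldots r_{j-1}) . x_j$. Here I would observe that the inner term is supplied with $r_1,\ldots,r_{j-1}$ only, i.e.\ it is missing precisely the argument $r_j$ whose declared type is $D^i_j[B/D]$. Since we are in the case $x_j : D$ we have $D^i_j = D$, so $D^i_j[B/D] = B$ and the inner term has type $B \rightarrow \bot = \neg B$ --- exactly the continuation type demanded by $\itcbv{D}{B}$. Feeding this continuation together with $x_j : D$ (the $d : D$ argument of $\itcbv{D}{B}$) then yields $\bot$. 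The main obstacle is exactly this bookkeeping: recognising that dropping one expected argument turns a would-be $\bot$-typed complete application into a $\neg B$-typed incomplete one, and checking that the substitution $[B/D]$ makes the type of the dropped argument equal to $B$ precisely when $x_j : D$. No induction is needed, since each reduct is typed from the already-declared types of the iterator names; the base rule at $j = a_i+1$, giving $f_i . r_1\ldots r_{a_i} . c : \bot$ from $f_i : D^i_1[B/D]\rightarrow\ldots\rightarrow D^i_{a_i}[B/D]\rightarrow\neg\neg B$ and $c : \neg B$, lies outside the range $[1\ldots a_i]$ but is checked in the same manner.
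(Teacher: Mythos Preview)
Your argument is correct. The paper organises the same verification as a downward induction on $j$ (base case $j=a_i$, with ``the result follows from the result for $j+1$'' for smaller $j$), whereas you observe that no genuine induction hypothesis is needed: since the types of all the auxiliary names $\itcbvimpl{D}{B}{i}{j}$ and of $\itcbv{D}{B}$ are already declared in $\Sigma$, each reduct can be typed directly via \trrule\ and \trappl\ against the declared type of the \emph{name} $\itcbvimpl{D}{B}{i}{j+1}$, rather than via any property of its reduct. Your two-case split on whether $D^i_j=D$ is exactly the paper's case distinction, and your identification of the partially applied inner term as having type $D^i_j[B/D]\rightarrow\bot = \neg B$ in the $x_j:D$ case is the heart of the check. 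One small point you handle explicitly but the paper leaves implicit: the typing prooftree in Definition~\ref{def.CCitcbv} literally covers only $j\in[1,a_i]$, so the type of $\itcbvimpl{D}{B}{i}{a_i+1}$ must be read off from its own rule with right-hand side $f_i.r_1\ldots r_{a_i}.c$; this is what makes the case $j=a_i$ go through.
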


\begin{proof}
For $j=a_i$, the result is immediate, for other $j$, the result
follows from the result for $j+1$, making a case distinction between
$D^i_j[B/D] = D^i_j$ or $D^i_j[B/D] = B$. \qed
\end{proof}

The following now easily  follows.

\begin{lemma}
The rules given in Definition \ref{def.CCitcbv} are well-typed.
\end{lemma}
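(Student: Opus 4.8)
The plan is to unfold the definition of well-typedness. By Definition \ref{def.well-typed} (the $(\text{Cons})$ rule), a program rule $n.x_1.\ldots.x_k \rewritesto q$ is well-typed once we exhibit, in the context assigning the declared types to $x_1,\ldots,x_k$, a derivation of $q:\bot$, with $n$ carrying the matching type $A_1 \rightarrow \ldots \rightarrow A_k \rightarrow \bot$ in the signature $\Sigma$. Definition \ref{def.CCitcbv} introduces three rule schemas — the main rule for $\itcbv{D}{B}$, the step rule for $\itcbvimpl{D}{B}{i}{j}$ with $1 \leq j \leq a_i$, and the base rule for $\itcbvimpl{D}{B}{i}{a_i+1}$ — and I would check each separately, reading the declared types straight off the prooftrees in the definition.

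For the main rule, the right-hand side is $x.(\itcbvimpl{D}{B}{1}{1}.\vec{f}.c).\ldots.(\itcbvimpl{D}{B}{n}{1}.\vec{f}.c)$ with $x:D$. Reading the type of $\itcbvimpl{D}{B}{i}{1}$ off the step prooftree at $j=1$ (so with an empty block of $r$-arguments) and discharging $\vec{f}$ and $c$ by $\trappl$, each factor $\itcbvimpl{D}{B}{i}{1}.\vec{f}.c$ gets type $D^i_1 \rightarrow \ldots \rightarrow D^i_{a_i} \rightarrow \bot$, i.e.\ $D^i \rightarrow \bot$. The type equation $D = (D^1 \rightarrow \bot) \rightarrow \ldots \rightarrow (D^n \rightarrow \bot) \rightarrow \bot$ from Definition \ref{def.CCdatatypes}, applied through $\treqconv$ and $n$ uses of $\trappl$, then forces the whole application to type $\bot$. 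For the base rule, the right-hand side $f_i.r_1\ldots r_{a_i}.c$ is handled directly: since $f_i : D^i[B/D] \rightarrow \neg\neg B$ and each $r_k : D^i_k[B/D]$, the application $f_i.r_1\ldots r_{a_i}$ has type $\neg\neg B = \neg B \rightarrow \bot$, and one further application to $c:\neg B$ yields $\bot$.

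The step rule is the only subtle case, and here I would simply appeal to Lemma \ref{lem.corrcbvit}: its statement is exactly that the right-hand side $\text{LHS}$ of the step rule (in either branch of the case split) has type $\bot$, under precisely the declared typing context of $\itcbvimpl{D}{B}{i}{j}$. The crux, already dealt with in that lemma by downward induction on $j$ with a case distinction on whether $D^i_j = D$, is the recursive branch: there $D^i_j[B/D] = B$, so the partial application $\itcbvimpl{D}{B}{i}{j+1}.\vec{f}.c.x_{j+1}\ldots x_{a_i}.r_1\ldots r_{j-1}$ has type $D^i_j[B/D] \rightarrow \bot = \neg B$, which is exactly the continuation type that $\itcbv{D}{B}.\vec{f}$ expects before being fed $x_j:D$. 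In the non-recursive branch $D^i_j[B/D] = D^i_j$, and the same partial application, now of type $D^i_j \rightarrow \bot$, is applied directly to $x_j : D^i_j$.

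The main bookkeeping obstacle is keeping the two argument blocks straight — the still-to-be-processed arguments $x_j,\ldots,x_{a_i}$ and the already-evaluated results $r_1,\ldots,r_{j-1}$ — and tracking the substitution $[B/D]$ as the index $j$ moves, so that the inner $\itcbvimpl{D}{B}{i}{j+1}$ term acquires \emph{exactly} type $\neg B$ in the recursive case. Once Lemma \ref{lem.corrcbvit} supplies the step rule, the remaining two checks are routine applications of $\trappl$ and $\treqconv$ against the type equations for $D$ and $\neg\neg B$.
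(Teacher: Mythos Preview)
Your proposal is correct and follows the paper's own approach: the paper states only that the lemma ``easily follows'' from Lemma~\ref{lem.corrcbvit}, and you have spelled out exactly what that entails, namely an appeal to Lemma~\ref{lem.corrcbvit} for the step rules together with direct type-checks of the main rule (via the $\mu$-unfolding of $D$) and of the base rule $\itcbvimpl{D}{B}{i}{a_i+1}$.
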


\begin{example}	For the iterator from $\nattype$ to $\nattype$, this amounts to the following
\begin{align*}
\itcbv{\nattype}{\nattype} . f_1 . f_2 . c . x & \rewritesto
x . (\itcbvimpl{\nattype}{\nattype}{Zero}{1} . f_1 . f_2 . c) . (\itcbvimpl{\nattype}{\nattype}{Succ}{1} . f_1 . f_2 . c) \\
\itcbvimpl{\nattype}{\nattype}{Zero}{1} . f_1 . f_2 . c & \rewritesto f_1 . c \\
\itcbvimpl{\nattype}{\nattype}{Succ}{1} . f_1 . f_2 . c . x_1 & \rewritesto
\itcbv{\nattype}{\nattype} . f_1 . f_2 . (\itcbvimpl{\nattype}{\nattype}{Succ}{2} . f_1 . f_2 . c) . x_1 \\
\itcbvimpl{\nattype}{\nattype}{Succ}{2} . f_1 . f_2 . c . r_1 & \rewritesto f_2 . r_1 . c
\end{align*}

This can be compressed a bit if we replace
$\itcbvimpl{\nattype}{\nattype}{Zero}{1} . f_1 . f_2 . c$ by $f_1
. c$.

Another simplification that we can do is to replace some auxiliary
names that are introduced in the iteration scheme by a
$\lambda$-term. For example we can replace
$\itcbvimpl{\nattype}{\nattype}{Succ}{2}$ by the new `name' $(\lambda
\vf_1, \vf_2, \vc , \vr_1 \mapsto \vf_2 . \vr_1 . \vc)$. The convention for such a name is that
$$(\lambda \vf_1, \vf_2, \vc , \vr_1 \mapsto \vf_2 . \vr_1 . \vc).f_1.f_2.c.r_1 \rewritesto f_2.r_1.c.$$ 
So, the arity of the new name is the number of arguments of the $\lambda$ and its program rule is given by the body.
Now the rules for $\itcbv{\nattype}{\nattype}$ simplify to
\begin{align*}
\itcbv{\nattype}{\nattype} . f_1 . f_2 . c . x & \rewritesto
x . (f_1  . c) . (\itcbvimpl{\nattype}{\nattype}{Succ}{1} . f_1 . f_2 . c) \\
\itcbvimpl{\nattype}{\nattype}{Succ}{1} . f_1 . f_2 . c . x_1 & \rewritesto
\itcbv{\nattype}{\nattype} . f_1 . f_2 . ((\lambda
\vf_1, \vf_2, \vc , \vr_1 \mapsto \vf_2 . \vr_1 . \vc).f_1.f_2.c) . x_1
\end{align*}
\end{example}

In Section \ref{examples}, we show more examples, notably in
Example \ref{ex:listcbv} we give the call-by-value iterator for
$\listtype{A}$ and we show how to program the `length' function with
it.

\subsection{Rules for programming with data types in CC}
Starting from the constructors for first order data types and the
call-by-name and call-by-value iterators we can program new functions
from existing ones. However, due to the fact that we are in CC and not
in $\lambda$-calculus, we need some additional `glue' to make flexible
use of the iteration scheme to define functions.

\begin{example}\label{exa.itcbv}
Given $\itcbv{\nattype}{\nattype}$ we can define $\AddCBV$ as follows.
$$\mathbf{AddCBV}.m.n.c := \itcbv{\nattype}{\nattype}. (F_1.m).F_2.c.n$$
where $F_1$ and $F_2$ are defined by
\begin{eqnarray*}
F_1.x.c &\rewritesto&  c.x\\ 
F_2.x.c &\rewritesto&  c.(\Succ.x)
\end{eqnarray*}
So, we need $2$ auxiliary functions to define $\AddCBV$ in terms of
$\itcbv{\nattype}{\nattype}$. In terms of Example \ref{exa.itcbv}, we
need the names $(\lambda \vx,\vc \mapsto \vc.\vx)$, which is $F_1$ and
$(\lambda \vx,\vc\mapsto \vc.(\Succ.\vx))$, which is $F_2$.
\end{example}

The example shows that, to really profit from the expressivity of the
iteration schemes, we must allow the addition of `simple'
functions. These are functions that have a non-circular definition.

\begin{definition} 
A {\em non-circular\/} program rule is a rule of the form
$$n.x_1.\ldots x_k \rewritesto q,$$ where the names occurring in $q$
are restricted to the constructors (Definition \ref{def.CCdatatypes})
and the iterators (CBN, Definition \ref{def.CCitcbn} and CBV,
Definition \ref{def.CCitcbv}).

We define $\programP$ as the set of program rules that contains
constructors for all data types (Definition \ref{def.CCdatatypes}),
iterators for all data types (CBN, Definition \ref{def.CCitcbn} and
CBV, \ref{def.CCitcbv}) and arbitrarily many non-circular rules.
\end{definition}

So, $\programP$ is an ``open set'': it contains constructors and
iterators for all (infinitely many) data-types that we can define, and
it includes arbitrarily many ``non-circular rules'' that can be added
when desired. This is needed to {\em really\/} define functions using
the iteration schemes.

\subsection{Translating between call-by-name and call-by-value}\label{sec.translcbvcbn}

We can mediate between the call-by-name and the call-by-value
representations of data by defining a function 
$\StoreNat: \nattype \rightarrow \neg\neg\nattype$ and a function 
$\UnstoreNat: \neg\neg\nattype \rightarrow \nattype$. Recall from Notation
\ref{not.representation} that $\representation{n}$ is defined as $\Succ^n.\Zero$.
The function $\StoreNat$ acts
as a {\em storage operator\/} in the sense of Krivine
\cite{Krivine94} in the sense that for $t:\nattype$ with
$t \approx \representation{n}$ and $c:\nattype \rightarrow \bot$,
\[
\StoreNat.t.c \rewritestoto c.\representation{n}.
\]
So, $\StoreNat$
first evaluates the argument $t$ of type $\nattype$ completely before
passing it on to the continuation $c$.  The term $\StoreNat.t.c$ can
be defined as $\AddCBV.t.\Zero.c$, but we can also define it directly
by
\begin{eqnarray*}
\StoreNat.n.r  &\rewritesto&  n.(r.\Zero).(\StoreNatA.r)\\
\StoreNatA.r.m       &\rewritesto&  \StoreNat.m.(\StoreNatB.r)\\
\StoreNatB.r.m'      &\rewritesto&  r.(\Succ.m')
\end{eqnarray*}
It is easy to verify that $\StoreNat: \nattype\rightarrow
\neg\neg\nattype$. (Note that $\StoreNatA, \StoreNatB: \neg\nattype \rightarrow \neg
\nattype$.)

In the reverse direction, we have $\UnstoreNat:
\neg\neg\nattype \rightarrow \nattype$, defined by, given $ f : \neg\neg\nattype$,
\begin{eqnarray*}
\UnstoreNat.f.z.s &\rewritesto& f.(\UseNat.z.s)\\
\UseNat.z.s.n &\rewritesto& n.z.s
\end{eqnarray*}

Then $\UnstoreNat : \neg\neg\nattype \rightarrow \nattype$. (Note that
$\UseNat: \bot \rightarrow \neg\nattype \rightarrow \neg\nattype$.)

\begin{lemma} For all $t:\nattype$ and $n\in\IN$ with $t\oeq \representation{n}$, $\StoreNat.t.c \rewritestoto c.\representation{n}$.\\
For all $n\in \IN$, $\UnstoreNat.(\StoreNat.\representation{n}) \oeq
\representation{n}$.
\end{lemma}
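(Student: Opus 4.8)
The plan is to reduce everything to the canonical case $t=\representation{n}$ of the first claim, namely that for every continuation $c:\nattype\rightarrow\bot$ we have $\StoreNat.\representation{n}.c\rewritestoto c.\representation{n}$, and to prove this by induction on $n$. For $n=0$ the $\StoreNat$-rule followed by the $\Zero$-rule gives $\StoreNat.\Zero.c\rewritesto \Zero.(c.\Zero).(\StoreNatA.c)\rewritesto c.\Zero$. For $n=m+1$, writing $\representation{m+1}=\Succ.\representation{m}$, the $\StoreNat$-rule, the $\Succ$-rule and the $\StoreNatA$-rule give $\StoreNat.(\Succ.\representation{m}).c\rewritestoto(\StoreNatA.c).\representation{m}\rewritesto\StoreNat.\representation{m}.(\StoreNatB.c)$; applying the induction hypothesis at the continuation $\StoreNatB.c$ yields $\rewritestoto(\StoreNatB.c).\representation{m}\rewritesto c.(\Succ.\representation{m})=c.\representation{m+1}$ by the $\StoreNatB$-rule. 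Note that the induction hypothesis must be stated with $c$ universally quantified, since it is instantiated at $\StoreNatB.c$ rather than at $c$.

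Given the canonical case, the second statement is immediate from Lemma~\ref{lem.approx}. Both $\UnstoreNat.(\StoreNat.\representation{n})$ and $\representation{n}$ have arity $2$, so it suffices to show that they become $\eqP$ after two fresh continuations $z,s$ are supplied. The $\UnstoreNat$-rule gives $\UnstoreNat.(\StoreNat.\representation{n}).z.s\rewritesto(\StoreNat.\representation{n}).(\UseNat.z.s)$; the canonical case with $c:=\UseNat.z.s$ gives $\rewritestoto(\UseNat.z.s).\representation{n}$; and the $\UseNat$-rule gives $\rewritesto\representation{n}.z.s$. Thus $\UnstoreNat.(\StoreNat.\representation{n}).z.s\rewritestoto\representation{n}.z.s$, so the two terms are $\eqP$ and Lemma~\ref{lem.approx} yields $\UnstoreNat.(\StoreNat.\representation{n})\oeq\representation{n}$.

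It remains to lift the first claim from $t=\representation{n}$ to an arbitrary $t:\nattype$ with $t\oeq\representation{n}$. After the single step $\StoreNat.t.c\rewritesto t.(c.\Zero).(\StoreNatA.c)$, congruence of $\oeq$ replaces $t$ by $\representation{n}$ up to observational equivalence, giving a term $\oeq\representation{n}.(c.\Zero).(\StoreNatA.c)$, which is the one-step reduct of $\StoreNat.\representation{n}.c$ and hence reduces to $c.\representation{n}$ by the canonical case; together with Corollary~\ref{cor.eqPapprox} this already gives $\StoreNat.t.c\oeq c.\representation{n}$.

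This is exactly where I expect the main obstacle. The congruence argument only delivers observational equivalence, whereas the statement asserts the strictly stronger syntactic reduction $\rewritestoto c.\representation{n}$. Since $\oeq$ probes a term only in argument position ($X.M\defined\iff X.N\defined$) and does not entail $\eqP$ (e.g.\ $\AddCBN.\Zero.\Zero\oeq\Zero$ but the two are not $\eqP$, cf.\ Example~\ref{exa.cbnadd}), it cannot by itself transport a reduction from $\representation{n}$ to $t$. To obtain the literal reduction one has to argue operationally, for instance by an auxiliary ``forcing'' induction on $n$: that any $t:\nattype$ with $t\oeq\representation{n}$, when applied to two continuations, weak-head reduces according to its numeral shape ($t.c_1.c_2\rewritestoto c_1$ when $n=0$, and $t.c_1.c_2\rewritestoto c_2.t'$ with $t'\oeq\representation{n-1}$ when $n>0$). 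Establishing this relies on determinism of $\programP$-reduction and on the structure of the Scott encoding rather than on $\oeq$ alone, and it is the genuinely non-routine ingredient; once it is available, feeding the two continuations $c.\Zero$ and $\StoreNatA.c$ into it and recursing reproduces the canonical computation step for step.
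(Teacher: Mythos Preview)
Your approach coincides with the paper's: it proves the canonical case by induction (parametrising the continuation as $\StoreNatB^p.r$ and showing $\StoreNat.\representation{n}.(\StoreNatB^p.r)\rewritestoto r.\representation{n+p}$, which is equivalent to your universal quantification over $c$), obtains the second claim by the identical three-step computation $\UnstoreNat.(\StoreNat.\representation{n}).z.s\rewritestoto\representation{n}.z.s$, and for the lift to general $t$ simply \emph{asserts} as a preliminary ``note'' exactly the forcing property you isolated --- that $t\oeq\representation{n}$ entails $t.z.s\rewritestoto z$ when $n=0$, respectively $t.z.s\rewritestoto s.q$ with $q\oeq\representation{m}$ when $n=m{+}1$ --- without proving it. So your diagnosis is accurate: that step is the non-routine ingredient, and the paper takes it for granted rather than supplying an argument.
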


\begin{proof}
For the first, we note that, if $t \oeq \representation{n}$, then: (i)
in case $n=0$, $t.z.s \rewritestoto z$; (ii) in case $n=m+1$, $t.z.s
\rewritestoto s.q$ for some $q$ with $q\oeq \representation{m}$. Then
we prove the following by induction on $n$ and $p$:
$\StoreNat.\representation{n}.(\StoreNatB^p.r) \rewritestoto
r.(\Succ^{n+p}.\Zero).$ 


For the second, we prove
$\UnstoreNat.(\StoreNat.\representation{n}).z.s \eqP
\representation{n}.z.s$, which is sufficient by Corollary
\ref{cor.eqPapprox}. 
We compute:
\begin{eqnarray*}
   \UnstoreNat.(\StoreNat.\representation{n}).z.s
&\rewritestoto& \StoreNat.\representation{n}.(\UseNat.z.s)\\
&\rewritestoto& \UseNat.z.s.\representation{n}\\
&\rewritestoto& \representation{n}.z.s
\end{eqnarray*}
Thus, $\UnstoreNat.(\StoreNat.\representation{n}).z.s =_P \representation{n}.z.s$ which was what we had to prove. \qed
\end{proof}

The map $\StoreNat$ can be seen as adding a double negation, whereas
$\UnstoreNat$ can be seen as a classical double negation law,
$\UnstoreNat : \neg\neg\nattype \rightarrow \nattype$. Note that the
fact that $\neg\neg\nattype \rightarrow \nattype$ is inhabited is not
a surprise, because $\nat$ is a negative type (ending in $\rightarrow
\bot$). The precise connection with classical logic remains to be
studied.

The storage and `unstorage' operators can most likely also be defined
for other data types.

More interesting to study further is the fact that we can combine
call-by-name and call-by-value functions. We detail this for natural
numbers.

\begin{example}\label{exa.cbvcbn}
If we have
$f_1: \nattype$ and $f_2: \nattype \rightarrow \nattype$, $c_1:\bot$,
$c_2: \nattype\rightarrow\bot$ and $n:\nattype$, then
$$\itcbn{\nattype}{\nattype} . f_1 . f_2 . n. c_1.c_2 :\bot$$ gives a
call-by-name iteration. However, one can also first define $\hatf_1 :
\neg\neg\nattype$, $\hatf_2 :\nattype \rightarrow \neg\neg\nattype$
and $\hatC : \nattype\rightarrow \bot$ by
\begin{eqnarray*}
\hatf_1. c &\rewritesto & c.f_1\\
\hatf_2. n.c &\rewritesto & c.(f_2.n)\\
\hatC.n  &\rewritesto & n.c_1.c_2
\end{eqnarray*}
Then, for $n:\nattype$, we have
$$\itcbv{\nattype}{\nattype} . \hatf_1 . \hatf_2 . \hatC.n :\bot$$
which gives call-by-value iteration. So, using this transformation
(from $f_1$ to $\hatf_1$ etc.) one can use the call-by-name functions
to compute call-by-value.
\end{example}





\section{Examples of iterators and programs}\label{examples}

\begin{example}
\label{ex:listcbn}
This is the call-by-name iterator for $\listtype{A}$ to $\nat$:
\[
	\begin{array}{lcl}
		\itcbn{\listtype{A}}{\nat} . f_1 . f_2 . x . c_1 . c_2 &
			\rewritesto &\\
\multicolumn{3}{r}{x . (\itcbnimpl{\listtype{A}}{\nat}{Nil} . f_1 . f_2 . c_1 . c_2) . (\itcbnimpl{\listtype{A}}{\nat}{Cons} . f_1 . f_2 . c_1 . c_2)} \vspace{0.2em} \\
		\itcbnimpl{\listtype{A}}{\nat}{Nil} . f_1 . f_2 . c_1 . c_2 &
			\rewritesto &
				f_1 . c_1 . c_2 \vspace{0.2em} \\
		\itcbnimpl{\listtype{A}}{\nat}{Cons} . f_1 . f_2 . c_1 . c_2 . x_1 . x_2 &
			\longrightarrow &
				f_2 . x_1 . (\itcbn{\listtype{A}}{\nat} . f_1 . f_2 . x_2) . c_1 . c_2 \\
	\end{array}
\]
\end{example}

\begin{example}
	\label{ex:listcbv}
	The scheme of Definition \ref{def.CCitcbv} yields the following call-by-value iterator for $\listtype{A}$ to $B$:
	\[
		\begin{array}{lcl}
			\itcbv{\listtype{A}}{B} . f_1 . f_2 . c . x &
				\rewritesto &
					x . (\itcbvimpl{\listtype{A}}{B}{Nil}{1} . f_1 . f_2 . c) . (\itcbvimpl{\listtype{A}}{B}{Cons}{1} . f_1 . f_2 . c) \vspace{0.2em} \\
			\itcbvimpl{\listtype{A}}{B}{Nil}{1} . f_1 . f_2 . c &
				\rewritesto &
					f_1 . c \vspace{0.2em} \\
			\itcbvimpl{\listtype{A}}{B}{Cons}{1} . f_1 . f_2 . c . x_1 . x_2 &
				\rewritesto &
					\itcbvimpl{\listtype{A}}{B}{Cons}{2} . f_1 . f_2 . c . x_2 . x_1 \vspace{0.2em} \\
			\itcbvimpl{\listtype{A}}{B}{Cons}{2} . f_1 . f_2 . c . x_2 . r_1 &
				\rewritesto &
					\itcbv{\listtype{A}}{B} . f_1 . f_2 . ( \itcbvimpl{\listtype{A}}{B}{Cons}{3} . f_1 . f_2 . c . r_1 ) . x_2 \vspace{0.2em} \\
			\itcbvimpl{\listtype{A}}{B}{Cons}{3} . f_1 . f_2 . c . r_1 . r_2 &
				\rewritesto &
					f_2 . r_1 . r_2 . c \\
		\end{array}
	\]

	We note that the number of program rules we need is highly dependent on the arity of the constructors \textbf{Nil} and \textbf{Cons}.
	Since \textbf{Nil} has no parameters, one rule is enough to define the operation on \textbf{Nil}.
	\textbf{Cons} on the other hand has two parameters.
	Because of this we get three program rules: one for evaluating each parameter of the constructor and one general rule that redirects every parameter to the corresponding program rule.
\end{example}


We now show the use of the iterators by providing the implementation of the function \textbf{Length}.
We use the iterators for $\listtype{A}$ from Example \ref{ex:listcbn} and Example \ref{ex:listcbv}.
\[
	\begin{array}{lcl}
		\mathbf{LengthCBN}.x.c_1.c_2 &
			\rewritesto &
				\itcbn{\listtype{A}}{\nattype}.\mathbf{LengthCBN}^1.\mathbf{LengthCBN}^2.x.c_1.c_2 \\
		
		\mathbf{LengthCBN}^1.c_1.c_2 &
			\rewritesto &
				\Zero.c_1.c_2 \\
		
		\mathbf{LengthCBN}^2.x.n.c_1.c_2 &
			\rewritesto &
				\Succ.n.c_1.c_2 \\
	\end{array}
\]
\[
	\begin{array}{lcl}
		\mathbf{LengthCBV}.x.c &
			\rewritesto &
				\itcbv{\listtype{A}}{\nattype}.\mathbf{LengthCBV}^1.\mathbf{LengthCBV}^2.c.x \\
		
		\mathbf{LengthCBV}^1.c &
			\rewritesto &
				c.\Zero \\
		
		\mathbf{LengthCBV}^2.x.n.c &
			\rewritesto &
				c.(\Succ.n) \\
	\end{array}
\]


We prove for $\mathbb{N}$ that the two iterator schemes (CBN and CBV)
indeed compute the desired results.  We expect that this proof can
easily be extended to prove the semantics of our iteration schemes for
any first-order algebraic data-type.  We leave this for future work.

We assume $D$ to be a data type which has a representation in
Continuation Calculus, $\mathbf{D}$, with a representation such that
$\representation{d} : \mathbf{D}$, for $d : D$. We now define what it means that a function over a data-type is represented in CC.

\weg{
We recall Definition
\ref{def:itnat} where the iterator for $\mathbb{N}$ has been defined.

Given $d : D$, $F : D \rightarrow D$, iterator $\mathbf{It}(x,F) : \mathbb{N} \rightarrow B$ is defined by, given $n \in \mathbb{N}$:
\[ 
	(\mathbf{It}(x,F))(n) = \left\{
		\begin{array}{ll}
			x & \text{if~} n = 0 \\
			F((\mathbf{It}(x,F))(n-1)) & \text{otherwise}
		\end{array}
	\right.
\]

\begin{lemma} For all $x : B$, $F : B \rightarrow B$ and $n \in \mathbb{N}$:
	\[ (\mathbf{It}(x,F))(n) = F^n(x) \]
\end{lemma}
}

\begin{definition}
We say that $f_1 :\neg\neg \mathbf{D}$ \emph{CBV-represents} $d : D$
and $f_2: \mathbf{D}\rightarrow \neg\neg\mathbf{D}$
\emph{CBV-represents} $F : D \rightarrow D$, if for all $c:\mathbf{D}
\rightarrow \bot$ and $n \in \mathbb{N}$ we have
\begin{align}
\tag{1}
f_1.c \rewritestoto ~ & c . \representation{d} \\
\tag{2}
f_2 . \representation{n} . c \rewritestoto ~ & c . \representation{F(n)}
\end{align}
\end{definition}

The following Theorem states the semantic correctness of
$\itcbv{\nattype}{D}$ in CC. 
The proof can be found in Section \ref{appendix} of the Appendix.

\begin{theorem} \label{thm.cbvit-correct}
If $f_1$ CBV-represents $d : D$ and $f_2$ CBV-represents $F : D
\rightarrow D$, then $\itcbv{\nattype}{D} . f_1 . f_2$ CBV-represents
$\mathbf{It}(d,F)$, that is: for all $c:\mathbf{D} \rightarrow \bot$
and all $n \in \mathbb{N}$ we have
\[
\itcbv{\nattype}{D} . f_1 . f_2 . c . \representation{n} ~\rewritestoto~ c . \representation{(\mathbf{It}(d,F))(n)}
\]
\end{theorem}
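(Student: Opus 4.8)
The plan is to prove, by induction on $n$, the (necessarily $c$-general) statement
\[
\forall c : \mathbf{D} \to \bot.\quad \itcbv{\nattype}{D} . f_1 . f_2 . c . \representation{n} \rewritestoto c . \representation{(\mathbf{It}(d,F))(n)}.
\]
Carrying the quantification over the continuation $c$ \emph{inside} the induction is the essential design choice: in the successor case the recursive call is performed with a \emph{different} continuation than the one we started with, so a version fixing a single $c$ would not yield a usable induction hypothesis. Before starting I would instantiate the general call-by-value iterator of Definition \ref{def.CCitcbv} to the source type $\nattype$ (two constructors, $\Zero$ of arity $0$ and $\Succ$ of arity $1$ with a recursive argument), recovering the four concrete rules displayed in the example preceding the theorem, and keep the Scott destructors $\Zero.c_1.c_2 \rewritesto c_1$ and $\Succ.x.c_1.c_2 \rewritesto c_2.x$ at hand together with $\representation{n} = \Succ^n.\Zero$.

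For the base case $n=0$, I would compute directly: $\itcbv{\nattype}{D} . f_1 . f_2 . c . \Zero$ fires the main iterator rule, the $\Zero$-destructor then selects the first continuation, giving $\itcbvimpl{\nattype}{D}{Zero}{1} . f_1 . f_2 . c \rewritesto f_1 . c$; by property (1) of CBV-representation this reduces to $c . \representation{d}$, and since $(\mathbf{It}(d,F))(0) = d$ this is exactly the required reduct.

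For the successor case $n = m+1$, with $\representation{m+1} = \Succ.\representation{m}$, the main rule followed by the $\Succ$-destructor produces $(\itcbvimpl{\nattype}{D}{Succ}{1} . f_1 . f_2 . c) . \representation{m}$, which rewrites to $\itcbv{\nattype}{D} . f_1 . f_2 . c' . \representation{m}$ where $c' := \itcbvimpl{\nattype}{D}{Succ}{2} . f_1 . f_2 . c$ is again a continuation of type $\mathbf{D} \to \bot$. Applying the induction hypothesis at this $c'$ gives $\itcbv{\nattype}{D} . f_1 . f_2 . c' . \representation{m} \rewritestoto c' . \representation{e}$ with $e := (\mathbf{It}(d,F))(m)$. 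Finally $c' . \representation{e} \rewritesto f_2 . \representation{e} . c \rewritestoto c . \representation{F(e)}$, the first step by the rule for $\itcbvimpl{\nattype}{D}{Succ}{2}$ and the second by property (2) of $f_2$; since $F(e) = F((\mathbf{It}(d,F))(m)) = (\mathbf{It}(d,F))(m+1)$ by definition of $\mathbf{It}$, the induction closes.

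I do not expect a genuine obstacle, since every step is a deterministic top-level reduction; the only point requiring care is the bookkeeping of the continuation in the successor case, and correspondingly carrying the quantifier over $c$ through the induction. A secondary subtlety is that property (2) must be invoked on the \emph{value} $e = (\mathbf{It}(d,F))(m) \in D$ produced by the recursive call rather than on a numeral, so $\representation{\cdot}$ is being used for arbitrary elements of the target type $D$ (as in Notation \ref{not.representation}); it is worth stating explicitly that property (2) is assumed for all such $e$, since that is precisely what makes the call-by-value step composable.
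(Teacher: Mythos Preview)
Your proposal is correct and follows essentially the same route as the paper's proof: induction on $n$, with the base case unwinding via the $\Zero$ destructor to $f_1.c$ and property~(1), and the successor case unwinding via the $\Succ$ destructor and rule (3c) to a recursive call with the modified continuation $c' = \itcbvimpl{\nattype}{D}{Succ}{2}.f_1.f_2.c$, then finishing with rule (3d) and property~(2). Your explicit carrying of the universal quantifier over $c$ through the induction is a clarification the paper leaves implicit (its $P(n)$ has $c$ free but is silently instantiated at $c'$ in the step), and your remark that property~(2) must range over arbitrary $e\in D$ rather than numerals is well taken---the paper's definition literally writes ``$n\in\mathbb{N}$'' there, which is a notational slip that its own proof already presumes corrected.
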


\begin{definition}
If $D$ has $m$ constructors, we say that $f_1 :\mathbf{D}$
\emph{CBN-represents} $d : D$ and $f_2 : \mathbf{D}\rightarrow
\mathbf{D}$ \emph{CBN-represents} $F : D \rightarrow D$, if for all
$c_i :\mathbf{D}^i$ and $n \in \mathbb{N}$ we have (writing $\vec{c} =
c_1 \ldots c_m$):
\begin{align}
\tag{1}
f_1.\vec{c} \oeq ~ & \representation{d} . \vec{c} \\
\tag{2}
f_2 . \representation{n} . \vec{c} \oeq ~ &
				\representation{F(n)} . \vec{c}
\end{align}
\end{definition}

The following Theorem states the semantic correctness of
$\itcbn{\nattype}{D}$ in CC. 
The proof can be found in Section \ref{appendix} of the Appendix.

\begin{theorem}\label{thm.cbnit-correct}
If $D$ has $m$ constructors, $f_1$ \emph{CBN-represents} $d : D$ and $f_2$ \emph{CBN-represents} $F : D \rightarrow D$, then $\itcbn{\nattype}{D} . f_1 . f_2$ CBN-represents $\mathbf{It}(d,F)$. That is, for all $c_i :\mathbf{D}^i$ and all $n \in \mathbb{N}$:
\[
\itcbn{\nattype}{D} . f_1 . f_2 . \representation{n} . \vec{c} ~\oeq~ \representation{(\mathbf{It}(d,F))(n)} . \vec{c}
\]
\end{theorem}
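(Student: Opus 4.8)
The plan is to prove the statement by induction on $n \in \IN$, exploiting the fact that over $\nattype$ the call-by-name iterator specializes to a very simple recursion. Write $e_n := (\mathbf{It}(d,F))(n)$, so that $e_0 = d$ and $e_{m+1} = F(e_m)$. Instantiating Definition \ref{def.CCitcbn} with the iterated type $\nattype$ (two constructors $\Zero,\Succ$, where $\Succ$ carries one recursive argument) and the result type $\mathbf{D}$ (with $m$ continuations $\vec{c} = c_1 \ldots c_m$), the relevant rules are $\itcbn{\nattype}{\mathbf{D}} . f_1 . f_2 . x . \vec{c} \rewritesto x . (\itcbnimpl{\nattype}{\mathbf{D}}{1} . f_1 . f_2 . \vec{c}) . (\itcbnimpl{\nattype}{\mathbf{D}}{2} . f_1 . f_2 . \vec{c})$, together with $\itcbnimpl{\nattype}{\mathbf{D}}{1} . f_1 . f_2 . \vec{c} \rewritesto f_1 . \vec{c}$ and $\itcbnimpl{\nattype}{\mathbf{D}}{2} . f_1 . f_2 . \vec{c} . x_1 \rewritesto f_2 . (\itcbn{\nattype}{\mathbf{D}} . f_1 . f_2 . x_1) . \vec{c}$. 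Combining these with the Scott reductions $\Zero . a . b \rewritesto a$ and $\Succ . x . a . b \rewritesto b . x$ gives the two computations that drive the induction.

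For the base case, $\itcbn{\nattype}{\mathbf{D}} . f_1 . f_2 . \representation{0} . \vec{c} \rewritestoto f_1 . \vec{c}$, which is $\oeq \representation{d} . \vec{c} = \representation{e_0} . \vec{c}$ by hypothesis $(1)$; since all terms involved have type $\bot$, reduction is contained in $\oeq$ (Corollary \ref{cor.eqPapprox}), and the two facts chain by transitivity. For the inductive step the same rules yield $\itcbn{\nattype}{\mathbf{D}} . f_1 . f_2 . \representation{m+1} . \vec{c} \rewritestoto f_2 . g . \vec{c}$, where $g := \itcbn{\nattype}{\mathbf{D}} . f_1 . f_2 . \representation{m}$ is the \emph{incomplete} recursive call. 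The target is $f_2 . g . \vec{c} \oeq \representation{F(e_m)} . \vec{c} = \representation{e_{m+1}} . \vec{c}$.

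The natural way to close the step uses congruence: from the equivalence $g \oeq \representation{e_m}$ of incomplete terms of arity $m$ one gets $f_2 . g . \vec{c} \oeq f_2 . \representation{e_m} . \vec{c}$, and hypothesis $(2)$, $f_2 . \representation{e_m} . \vec{c} \oeq \representation{F(e_m)} . \vec{c}$, finishes the chain. I would therefore run the induction on the stronger, continuation-free statement $T(n)\colon \itcbn{\nattype}{\mathbf{D}} . f_1 . f_2 . \representation{n} \oeq \representation{e_n}$, so that $g \oeq \representation{e_m}$ is exactly the hypothesis $T(m)$, and derive the theorem afterwards by appending $\vec{c}$ via congruence. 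The computations above show that, for each $n$, the two sides of $T(n)$ agree once applied to (fresh) continuations $\vec{c}$, that is $\itcbn{\nattype}{\mathbf{D}} . f_1 . f_2 . \representation{n} . \vec{c} \oeq \representation{e_n} . \vec{c}$.

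The crux, and the step I expect to be the main obstacle, is exactly the passage from this applied equivalence back to the unapplied $T(n)$. This is the converse direction of Lemma \ref{lem.approx}, but with an $\oeq$-hypothesis on the applied terms rather than the stronger $\eqP$-hypothesis that Lemma \ref{lem.approx} assumes; since our hypotheses $(1),(2)$ only deliver $\oeq$ (never $\eqP$, because $f_1,f_2$ are known only up to observational behaviour), Lemma \ref{lem.approx} cannot be used directly. The difficulty is genuinely call-by-name specific: in the call-by-value proof (Theorem \ref{thm.cbvit-correct}) the recursive call occurs as a complete term already supplied with its continuation, so the induction hypothesis plugs in as an honest reduction $\rewritestoto$; here the recursive call $g$ is handed to $f_2$ still missing its continuations, so it can only be compared up to $\oeq$. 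I would discharge this with a context-lemma argument for $\oeq$ (or by citing the corresponding characterization of observational equivalence from \cite{gg}): for terms $M,N$ of equal arity $k$, if $M . \vec{c} \oeq N . \vec{c}$ for fresh $\vec{c}$, then $M \oeq N$. With that lemma available, both the base case and the inductive step of $T(n)$ follow, and the theorem is immediate.
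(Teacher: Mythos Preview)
Your proposal is correct and follows the same overall strategy as the paper: induction on $n$, unfolding the $\nattype$-instance of the CBN iterator to reach $f_1.\vec{c}$ in the base case and $f_2.(\itcbn{\nattype}{\mathbf{D}}.f_1.f_2.\representation{m}).\vec{c}$ in the step, then invoking the representation hypotheses (1) and (2). The organizational difference is only that the paper inducts on the \emph{applied} statement $P(n)\colon \itcbn{\nattype}{\mathbf{D}}.f_1.f_2.\representation{n}.\vec{c}\oeq\representation{e_n}.\vec{c}$ rather than your unapplied $T(n)$; in the inductive step the paper then has to extract $g\oeq\representation{e_m}$ from $P(m)$ before using congruence, so the ``lifting'' you isolate is needed in either layout.

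You have put your finger precisely on the delicate point. The paper justifies the step $f_2.g.\vec{c}\oeq f_2.\representation{e_m}.\vec{c}$ by citing $P(m)$ together with Lemma~\ref{lem.approx}, and adds a post-proof remark that this is legitimate even though $g$ is incomplete. Your observation that Lemma~\ref{lem.approx} is stated with an $\eqP$-hypothesis (not $\oeq$) while $P(m)$ only supplies $\oeq$ is well taken; what is really being used is the natural strengthening ``$M.\vec{c}\oeq N.\vec{c}$ for fresh $\vec{c}$ implies $M\oeq N$'', which is the context-lemma you propose. So your analysis matches the paper's argument and, if anything, is more explicit about the exact principle required at the crux.
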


\section{Future Work and Conclusions}
As future work, we want to better understand the relation with
classical logic, which we have suggested in Section
\ref{examples}. Here we have also defined storage (and unstorage)
operators, which we would like to define in general for all data
types. The possibility to combine call-by-value and call-by-name in a
flexible way, which is directed by the types, is an interesting
feature, which warrants further study. The fact that computation is
completely deterministic and that the function definition of $f$
itself determines whether $f$ is cbv or cbn, makes this combining
of cbv and cbn very perspicuous.

The continuations in this paper are limited, and do not include
delimited continuations. (More examples using continuations can be
found in \cite{gg}.) It would be interesting to see if delimited
continuations can be added.

In a forthcoming paper \cite{geuvers-typlamIt} we prove the
termination of all CC terms written using the program rules of
$\programP$. This is done by translating CC with these rules to a
typed $\lambda$-calculus with (cbv and cbn) iterators. We wish to
further study the precise translations and connections between CC and
(typed) $\lambda$-calculus.

\bibliography{sources}{}
\bibliographystyle{eptcs}

\section{Appendix}\label{appendix}

Proof of Theorem \ref{thm.cbvit-correct}

\begin{proof}
	We recall the definition of $\itcbv{\nattype}{B}{B}$:
	\begin{align}
		\tag{3a}
			\itcbv{\nattype}{B} . f_1 . f_2 . c . x & \rewritesto
				x . (\itcbvimpl{\nattype}{B}{Zero}{1} . f_1 . f_2 . c) . (\itcbvimpl{\nattype}{B}{Succ}{1} . f_1 . f_2 . c) \\
		\tag{3b}
			\itcbvimpl{\nattype}{B}{Zero}{1} . f_1 . f_2 . c & \rewritesto
				f_1 . c \\
		\tag{3c}
			\itcbvimpl{\nattype}{B}{Succ}{1} . f_1 . f_2 . c . x_1 & \rewritesto
				\itcbv{\nattype}{B} . f_1 . f_2 . (\itcbvimpl{\nattype}{B}{Succ}{2} . f_1 . f_2 . c) . x_1 \\
		\tag{3d}
			\itcbvimpl{\nattype}{B}{Succ}{2} . f_1 . f_2 . c . r_1 & \rewritesto
				f_2 . r_1 . c
	\end{align}
	
	By induction on $n$ we prove that for all $n$, $P(n)$ holds, with
	\[
		P(n) ~:=~ \itcbv{\nattype}{B} . f_1 . f_2 . c . \representation{n} \rewritestoto c . \representation{(\mathbf{It}(x,F))(n)}
	\]
	
	$P(0)$ holds, because:
	\[
		\begin{array}{rcl}
			\itcbv{\nattype}{B} . f_1 . f_2 . c . \representation{0}
				& \rewritestoby{\text{(3a)}} & \representation{0} . (\itcbvimpl{\nattype}{B}{Zero}{1} . f_1 . f_2 . c) . (\itcbvimpl{\nattype}{B}{Succ}{1} . f_1 . f_2 . c) \\
				& = & \Zero . (\itcbvimpl{\nattype}{B}{Zero}{1} . f_1 . f_2 . c) . (\itcbvimpl{\nattype}{B}{Succ}{1} . f_1 . f_2 . c) \\
				& \rewritestoby{Zero} & \itcbvimpl{\nattype}{B}{Zero}{1} . f_1 . f_2 . c \\
				& \rewritestoby{\text{(3b)}} & f_1 . c \\
				& \rewritestotoby{\text{(1)}} & c . \representation{x} \\
				& = & c . \representation{(\mathbf{It}(x,F))(0)}
		\end{array}
	\]
	
	Assume $P(n)$ holds.
	
	$P(n+1)$ holds, because:
	\[
		\begin{array}{rcl}
			\itcbv{\nattype}{B} . f_1 . f_2 . c . \representation{n+1}
				& \rewritestoby{\text{(3a)}} & \representation{n+1} . (\itcbvimpl{\nattype}{B}{Zero}{1} . f_1 . f_2 . c) . (\itcbvimpl{\nattype}{B}{Succ}{1} . f_1 . f_2 . c) \\
				& = & \Succ . \representation{n} . (\itcbvimpl{\nattype}{B}{Zero}{1} . f_1 . f_2 . c) . (\itcbvimpl{\nattype}{B}{Succ}{1} . f_1 . f_2 . c) \\
				& \rewritestoby{\text{Succ}} & \itcbvimpl{\nattype}{B}{Succ}{1} . f_1 . f_2 . c . \representation{n} \\
				& \rewritestoby{\text{(3c)}} & \itcbv{\nattype}{B} . f_1 . f_2 . (\itcbvimpl{\nattype}{B}{Succ}{2} . f_1 . f_2 . c) . \representation{n} \\
				& \rewritestotoby{P(n)} & \itcbvimpl{\nattype}{B}{Succ}{2} . f_1 . f_2 . c . \representation{(\mathbf{It}(x,F))(n)} \\
				& \rewritestoby{\text{(3d)}} & f_2 . \representation{(\mathbf{It}(x,F))(n)} . c \\
				& \rewritestotoby{\text{(2)}} & c . \representation{F((\mathbf{It}(x,F))(n))} \\
				& = & c . \representation{F(F^n(x))} \\
				& = & c . \representation{F^{n+1}(x)} \\
				& = & c . \representation{(\mathbf{It}(x,F))(n+1)} \\
		\end{array}
	\]
\end{proof}

Proof of Theorem \ref{thm.cbnit-correct}

\begin{proof}
	We recall the definition of $\itcbn{\nattype}{B}$:
	\begin{align}
		\tag{3a}
			\itcbn{\nattype}{B} . f_1 . f_2 . x . \vec{c} & \rewritesto
				x . (\itcbnimpl{\nattype}{B}{Zero} . f_1 . f_2 . \vec{c}) . (\itcbnimpl{\nattype}{B}{Succ} . f_1 . f_2 . \vec{c}) \\
		\tag{3b}
			\itcbnimpl{\nattype}{B}{Zero} . f_1 . f_2 . c & \rewritesto
				f_1 . \vec{c} \\
		\tag{3c}
			\itcbnimpl{\nattype}{B}{Succ} . f_1 . f_2 . c . x_1 & \rewritesto
				f_2 . (\itcbn{\nattype}{B} . f_1 . f_2 . x_1) . \vec{c}
	\end{align}
	
	By induction on $n$ we prove that for all $n$, $P(n)$ holds, with
	\[
			P(n) ~:=~
				\itcbn{\nattype}{B} . f_1 . f_2 . \representation{n} . \vec{c} \oeq \representation{(\mathbf{It}(x,F))(n)} . \vec{c}
	\]
	
	$P(0)$ holds, because:
	\[ \begin{array}{rcl}
		\itcbn{\nattype}{B} . f_1 . f_2 . \representation{0} . \vec{c}
			& \rewritestoby{\text{(3a)}} & \representation{0} . (\itcbnimpl{\nattype}{B}{Zero} . f_1 . f_2 . \vec{c}) . (\itcbnimpl{\nattype}{B}{Succ} . f_1 . f_2 . \vec{c}) \\
			& = & \Zero . (\itcbnimpl{\nattype}{B}{Zero} . f_1 . f_2 . \vec{c}) . (\itcbnimpl{\nattype}{B}{Succ} . f_1 . f_2 . \vec{c}) \\
			& \rewritestoby{\text{Zero}} & \itcbnimpl{\nattype}{B}{Zero} . f_1 . f_2 . \vec{c} \\
			& \rewritestoby{\text{(3b)}} & f_1 . \vec{c} \\
			& \rewritestotoby{\text{(1)}} & \representation{x} . \vec{c}\\
			& = & \representation{(\mathbf{It}(x,F))(0)} . \vec{c}
	\end{array} \]
	
	Assume $P(n)$ holds.
	
	$P(n+1)$ holds, because:
	\[ \begin{array}{rcl}
		\itcbn{\nattype}{B} . f_1 . f_2 . \representation{n+1} . \vec{c}
			& \rewritestoby{\text{(3a)}} & \representation{n+1} . (\itcbnimpl{\nattype}{B}{Zero} . f_1 . f_2 . \vec{c}) . (\itcbnimpl{\nattype}{B}{Succ} . f_1 . f_2 . \vec{c}) \\
			& = & \Succ . \representation{n} . (\itcbnimpl{\nattype}{B}{Zero} . f_1 . f_2 . \vec{c}) . (\itcbnimpl{\nattype}{B}{Succ} . f_1 . f_2 . \vec{c}) \\
			& \rewritestoby{\text{Succ}} & \itcbnimpl{\nattype}{B}{Succ} . f_1 . f_2 . \vec{c} . \representation{n} \\
			& \rewritestoby{\text{(3c)}} & f_2 . (\itcbn{\nattype}{B} . f_1 . f_2 . \representation{n}) . \vec{c} \\
			& \overset{P(n), [\ref{lem.approx}]}{\oeq} & f_2 . \representation{(\mathbf{It}(x,F))(n)} . \vec{c} \\
			& \overset{\text{(2)}}{\oeq} & \representation{F((\mathbf{It}(x,F))(n))} . \vec{c} \\
			& = & \representation{F(F^n(x))} . \vec{c} \\
			& = & \representation{F^{n+1}(x)} . \vec{c} \\
			& = & \representation{(\mathbf{It}(x,F))(n+1)} . \vec{c} \\
	\end{array} \]
\end{proof}

In the proof, we say that $f_2 . (\itcbn{\nattype}{B} . f_1 . f_2 . \representation{n}) . \vec{c} ~\oeq~ f_2 . \representation{(\mathbf{It}(x,F))(n)} . \vec{c}$.
This may may not be immediately obvious, as the sub-term $\itcbn{\nattype}{B} . f_1 . f_2 . \representation{n}$ is incomplete. However, it is an immediate consequence of Lemma \ref{lem.approx}: If $M,N$ are terms of arity $k$, and $M.t_1.\CCdots.t_k =_p N.t_1.\CCdots.t_k$ for all $\vec{t}$, then $M \oeq N$.

\end{document}